\newcolumntype{P}[1]{>{\centering\arraybackslash}p{#1}}
\newcommand\restartchapters{\par
  \setcounter{chapter}{0}%
  \setcounter{section}{0}%
  \gdef\@chapapp{\chaptername}%
  \gdef\thechapter{\@arabic\c@chapter}}
\newtheorem{theorem}{\bf Theorem}
\newtheorem{proof}{\bf Proof}
\newtheorem{remark}{Remark}
\def\endthebibliography{%
	\def\@noitemerr{\@latex@warning{Empty `thebibliography' environment}}%
	\endlist
}
\begin{document}
\bstctlcite{IEEEexample:BSTcontrol}
\title{\huge Security-Reliability Trade-Off Analysis for SWIPT- and AF-Based IoT Networks with Friendly Jammers }
\author{\normalsize
\IEEEauthorblockN{$\text{Tan N. Nguyen}, \textit{Member, IEEE}$, $\text{Dinh-Hieu Tran}, \textit{Graduate Student Member, IEEE}$, $\text{Trinh Van Chien}, \textit{Member, IEEE}$, $\text{Van-Duc Phan}$, $\text{Miroslav Voznak}, \textit{Senior Member, IEEE}$, $\text{Phu Tran Tin}$, $\text{Symeon Chatzinotas}, \textit{Senior Member, IEEE},$ $\text{Derrick Wing Kwan Ng}, \textit{Fellow, IEEE}$, and $\text{H. Vincent Poor}, \textit{Life Fellow, IEEE}$  \vspace*{-0.8cm} }

\thanks{Manuscript received XXX; revised XXX;
	accepted XXX. Date of publication XXX; date of
	current version XXX. This research was financially supported by Van Lang University, Vietnam and is supported by Industrial University of Ho Chi Minh City (IUH) under grant number 131/HD-DHCN, in part by the Australian Research Council’s Discovery Project (DP210102169), and in part by the Ministry of Education, Youth and Sports of the Czech Republic under the grant SP2021/25 and e-INFRA CZ (ID:90140). (\textit{Corresponding author: Van-Duc Phan.})}
\thanks{Tan N. Nguyen is with the Communication and Signal Processing Research Group, Faculty of Electrical and Electronics Engineering, Ton Duc Thang University, Ho Chi Minh City, Vietnam. (e-mail:nguyennhattan@tdtu.edu.vn).}
\thanks{Dinh-Hieu Tran and Symeon Chatzinotas are with the Interdisciplinary Centre for Security, Reliability and Trust (SnT), the University of Luxembourg, Luxembourg. (e-mail: \{hieu.tran-dinh, symeon.chatzinotas\} @uni.lu).}
\thanks{Trinh Van Chien is with the School of Information and Communication Technology (SoICT), Hanoi University of Science and Technology, Vietnam. (e-mail: chientv@soict.hust.edu.vn.)}
\thanks{Derrick Wing Kwan Ng is with the School of Electrical Engineering and Telecommunications, University of New South Wales, Sydney, NSW 2025, Australia. (e-mail: w.k.ng@unsw.edu.au).}
\thanks{H. V. Poor is with the Department of Electrical and Computer Engineering, Princeton University, Princeton, NJ 08544 USA. (email: poor@princeton.edu).}
\thanks{Miroslav Voznak is with VSB - Technical University of Ostrava, 17. listopadu 15/2172, 708 33 Ostrava - Poruba, Czech Republic. (e-mail:miroslav.voznak@vsb.cz).}
\thanks{Van-Duc Phan is at Faculty of Automobile Technology, Van Lang University, Ho Chi Minh City, Vietnam. (email: duc.pv@vlu.edu.vn).}
\thanks{Phu Tran Tin is with Faculty of Electronics Technology, Industrial University of Ho Chi Minh City, Ho Chi Minh City, Vietnam (e-mail:phutrantin@iuh.edu.vn).}
}
%\vspace*{-0.5cm}}
\maketitle
\thispagestyle{empty}
\pagestyle{empty}
%\vspace*{-1cm}
\begin{abstract}
	%\footnotesize
Radio-frequency (RF) energy harvesting (EH) in wireless relaying networks has attracted considerable recent interest, especially for supplying energy to relay nodes in Internet-of-Things (IoT) systems to assist the information exchange between a source and a destination. Moreover, limited hardware, computational resources, and energy availability of IoT devices have raised various security challenges. To this end, physical layer security (PLS) has been proposed as an effective alternative to cryptographic methods for providing information security. In this study, we propose a PLS approach for simultaneous wireless information and power transfer (SWIPT)-based half-duplex (HD) amplify-and-forward (AF) relaying systems in the presence of an eavesdropper. Furthermore, we take into account both static power splitting relaying (SPSR) and dynamic power splitting relaying (DPSR) to thoroughly investigate the benefits of each one. To further enhance secure communication, we consider multiple friendly jammers to help prevent wiretapping attacks from the eavesdropper. More specifically, we provide a reliability and security analysis by deriving closed-form expressions of outage probability (OP) and intercept probability (IP), respectively, for both the SPSR and DPSR schemes. Then, simulations are also performed to validate our analysis and the effectiveness of the proposed schemes. Specifically, numerical results illustrate the non-trivial trade-off between reliability and security of the proposed system. In addition, we conclude from the simulation results that the proposed DPSR scheme outperforms the SPSR-based scheme in terms of OP and IP under the influences of different parameters on system performance.
\end{abstract}

%\vspace*{-0.2cm}
\begin{IEEEkeywords}
Amplify-and-forward, dynamic power splitting, intercept probability, outage probability, source selection, SWIPT.
\end{IEEEkeywords}

\vspace*{-0.8cm}
\section{Introduction} \label{Introduction}
%There exist connections between tens of billions of devices,
The Internet of Things (IoT) has played a pivotal role in fifth generation (5G) and beyond networks, seen as a novel solution to enable a smarter and safer life via autonomous monitoring and control in fields such as healthcare, manufacturing, and agriculture as demonstrated in  \cite{WangPLS2019,Ge5G2020,LiuToward5G2020,PhuBackMEC,CongIoT6G,FangIoT6G,Tran2020FDUAV} and the reference therein. Nevertheless, the enormous number of potential IoT devices also impose technical challenges in wireless communication due to limited resources comprising, for example, available bandwidth and energy supply. In particular, replacing or recharging batteries for IoT devices is generally costly, inconvenient, and even impossible in many scenarios such as hazardous or toxic environments or inside the human body. Smart ways of using and harvesting energy in IoT devices have attracted particular interest.
\subsection{Related Works}
To address these issues, energy harvesting (EH) communication networks are considered as alternative solutions. Energy can be harvested from sources such solar \cite{hieu2016stability}, wind \cite{JungInvestWind,Yenselfwind}, vibration \cite{SangVibration,QiuVibration}, and radio frequency (RF) signals, among which RF-based EH is an attractive solution because of its controllability and predictability. More importantly, it can carry both energy and information. Based on the above discussion, simultaneous wireless information and power transfer (SWIPT) is a promising future direction for realizing sustainable wireless communication \cite{tran2020throughput}. There are two kinds of EH receivers used in SWIPT networks: time switching (TS) and power splitting (PS) techniques. For the TS technique, the receiving node switches between information transfer (IT) and EH in different time slots, whereas, in the PS method, it splits the received power into factors for IT and EH \cite{12,hieu2018performance}.

Beyond the benefits of EH, relay nodes in cooperative communication networks can help a source node transfer information to a destination, which can extend the coverage of IoT devices with inherent limitations, such as low power and remote location \cite{PetrovVehicle2018,TungIoT2021,van2021controlling,tran2021satellite}. Therefore, cooperative relay networks with EH have received significant attention from researchers in recent years \cite{13,14,15,16,17,18,tin2020exploiting,nguyen2020wireless}. For example, amplify-and-forward (AF)-based wireless cooperative or sensor networks with TS and PS protocols were considered \cite{13}. Also, Chen et al. \cite{14} proposed and investigated a novel multi-hop AF relaying network in terms of co-channel interference (CCI) and Nakagami-$m$ fading, whereas each user could harvest energy from the CCI. In contrast to \cite{13, 14, van2020coverage} which considered only SISO systems, a multiple-input multiple-output (MIMO) system for maximizing the efficiency of SWIPT was investigated in \cite{15}. Furthermore, the system performance of cognitive radio networks (CRNs) was studied in \cite{16,17}, and the system performance of bidirectional relay networks was considered in \cite{18}. Besides, Tin et al. \cite{tin2020exploiting} proposed a new EH-based two-way (TW) half-duplex (HD) relay sensor network in the presence of a direct link between a transmitter and a receiver. Specifically, they derived the exact and asymptotic ergodic capacity and performed an exact analysis of symbol error rate. Furthermore, the authors in \cite{nguyen2020wireless} also proposed and investigated a new system model for SWIPT-based TW relaying systems. Therein, they derived closed-form expressions for the outage probability (OP) of three relaying schemes, termed decode-and-forward (DF), AF, and hybrid-decode-and-forward (HDAF). In addition, An et al. \cite{hoang2020physical} considered hybrid time–power splitting (HTPS) TW HD cooperative relaying in the presence of an eavesdropper. In this context, they derived closed-form expressions for OP and IP using the maximal ratio combining (MRC) and the selection combining (SC). The authors in \cite{tin2020power} investigated physical layer security (PLS) in a power beacon-assisted full-duplex (FD) EH relaying system using delay-tolerant (DT) and delay-limited (DL) methods. It was shown that the OP can be significantly improved by applying a dynamic PS scheme in \cite{42} and an adaptive PS scheme as in \cite{43,44}. The authors in \cite{42} considered a novel dynamic asymmetric PS scheme based on asymmetric instantaneous channel gains between relay and destinations to improve the OP. In \cite{43}, the authors proposed an adaptive PS protocol to enhance OP performance and average achievable capacity. Moreover, the authors in \cite{44} took into account the combination of the TS protocol and the dynamic PS to demonstrate the superiority of their proposed scheme in terms of OP and data transmission rate over the TS and PS schemes. 

Due to the broadcast nature of the wireless medium, information in IoT networks can be easily overheard, thus, the problem of enhancing security in IoT communications is an important issue. In comparison with conventional upper-layer security methods, PLS has many advantages such as 1) uncomplicated secret key distribution and management due to independence in encryption/decryption operations; 2) simple signal processing operations that involve minor additional overhead; and 3) adaptive signal design and resource allocation with flexible security levels. Based on these advantages,  PLS represents a promising solution for IoT networks \cite{19,20,22,23,24,25}. In \cite{27}, secrecy outage performance (SOP) using transmit antenna selection (TAS)/SC of multi-hop cognitive wireless sensor networks was investigated, and secure performance in a dual-hop MIMO relay system with outdated channel state information (CSI) was evaluated in \cite{26}. Also, in \cite{dinh2018secrecy}, the authors proposed and analyzed a generalized partial relay selection (PRS) protocol to improve security for CRNs with both cases of perfect or imperfect CSI. Moreover, in \cite{HieuIEEEsensor}, the authors also considered a novel system model for an EH-based PLS multi-hop multi-path cooperative wireless network. They then proposed three relay protocols, termed the shortest path, the random path, and the best path selection schemes to enhance PLS performance. A power allocation scheme to improve the PLS of the relay network was presented in \cite{33}, and the authors in \cite{34} proposed user selection along with an antenna selection (AS) scheme to maximize the end-to-end signal-to-noise ratios of a cellular multi-user two-way AF relay network. In addition, a novel wireless caching scheme to enhance the PLS of video streaming in cellular networks was proposed in \cite{35,36}; and the influence of an eavesdropper and CCI on intercept probability (IP) was considered in \cite{37}. 

Despite the fruitful research in the literature, the aforementioned works in \cite{19,20,22,23,24,25,27,26,dinh2018secrecy,HieuIEEEsensor,33,34,35,36,37} did not take into account jammers or artificial noises (ANs) to improve system security. Recently, intensive works have brought ANs and jammers into consideration \cite{4595041,6172252,5940246,7018202,7229350,6678044,6985747}. To prevent the eavesdropper from intercepting transmitted signals, Goel and Negi in \cite{4595041} studied the employment of multiple transmit antennas for generating artificial noise to interfere with the eavesdropper without disturbing the legitimate receiver. It was shown in \cite{4595041} that wireless secrecy can be guaranteed with the aid of AN if the transmitter has more antennas than the eavesdropper. In \cite{6172252}, the authors investigated secure communication of a MIMO system, in which the source, destination, and eavesdropper are each equipped with a random number of antennas. In \cite{5940246,7018202,7229350}, distributed beamforming or friendly jammers were deployed in the relay networks to prevent the eavesdropper from overhearing confidential information. In \cite{6678044,6985747}, the authors studied multiuser scheduling as a method to enhance the system PLS.

\subsection{Motivation and Contributions}
Despite previous achievements in the above-mentioned works, the investigation of SWIPT and PLS in cooperative IoT networks still has room for research in terms of the closed-form expressions for the OP and IP, which are independent of small-scale fading coefficients and can work for a long period of time. Moreover, to the best of our knowledge, there is no such related work on designing the optimal dynamic PS scheme for a SWIPT- and AF-based relay network consisting of multiple sources, multiple friendly jammers, and an EH relay, and a destination in the presence of an eavesdropper. Motivated by the above discussions, this paper provides a thorough analysis of the reliability and security trade-off using PS-based relay protocol. The best source approach is proposed to enhance performance with the effect of an eavesdropper and the presence of friendly jammers to prevent this eavesdropper. The main contributions of this research can be summarized as follows:
\begin{itemize}
	\item We derive closed-form expressions for the OP of the legitimate communications and an exact integral-form for the IP of the eavesdropper's channel under the assistance from multiple friendly jammers. %Especially, this is a troublesome problem as mathematical analysis involves many random variables, thus the closed-form expressions derivation is more challenging.
	\item This work also provides an in-depth analysis of the influence of various system parameters on security and reliability performance. For the SWIPT technique, both dynamic power splitting-based relaying (DPSR) and static power splitting-based relaying (SPSR) are considered in our work to give a full picture of the advantages of each method for OP and IP cases. 
	\item Mathematical analysis is given to obtain an exact closed-form expression for the optimal power splitting ratio, i.e., $\rho^\star$, to maximize the total achievable rate at the destination.
	\item The correctness of our analysis is confirmed through numerical simulations. From the simulation results, we provide recommendations on selecting the configurations to obtain reliable and secure transmission without paying too much for the complexity of the system.
\end{itemize}

The rest of the paper is organized as follows. The system model are given in Section~\ref{System_Model}. The performance is presented in Section~\ref{sec:3}. Numerical results are depicted in Section~\ref{Sec:Num}, and Section~\ref{Sec:Con} concludes the paper.

%\emph{Notation}: Scalars and vectors are denoted by lower-case letters and boldface lower-case letters, respectively. For a set $\mathcal{{K}}$, $|\mathcal{{K}}|$ denotes its cardinality. For a vector $\bf v$, $\left\| \bf v \right\|_1$ and $\left\| \bf v \right\|$  denote its $\ell_1$ and Euclidean ($\ell_2$) norm, respectively. $\mathbb{R}$ represents for the real matrix. $\mathbb{R}^+$ denotes the non-negative real numbers, i.e., $\mathbb{R}^+=\{x \in \mathbb{R}|x \ge 0\}$. $x \sim {\cal{CN}}(0,\sigma^2)$ represents circularly symmetric complex Gaussian random variable with zero mean and variance $\sigma^2$. Finally, $\mathbb{E} [x]$ denotes the expected value of $x$. $\ceil[\big]{x}$ and $\nint{x}$ define a ceiling and nearest integer function of a number $x$, respectively.

\section{System Model}
 \label{System_Model}
As described in Fig.1, the system model includes multi-source (M source nodes) transfer information to the destination (D) via one relay (R) with the presence of multi friendly jammers (${{\rm{J}}_1},...,{{\rm{J}}_{\rm{K}}}$) and an eavesdropper (E). Friendly jammers are legitimate users in the system and emit artificial noise using a pseudo-random sequence, which is known to other legitimate users (i.e., source, relay, and destination) and remains unknown to eavesdropper E \cite{YulongTWC}. It should be noted that these pseudo-random sequences are known among legitimate users but not to the illegitimate user. Moreover, similar to conventional cryptography, pseudo-random sequences do not have to be pre-shared, they can be acquired by legitimate users through physical layer key agreement and generation that supports channel estimation, as intensively studied in the information-theoretic PLS literature \cite{ye2010information,bloch2011physical}. Particularly, the wireless system is assumed to change the pseudo-random sequence frequently, which significantly reduces the possibility that an eavesdropper can know all of these sequences and enhances system security. 

We assume that the direct connection between the $M$ sources and the destination nodes is too weak due to severe fading or long distances, hence, the only available communication path as well as power transfer path is through relay R. We also assume that E is located far away from M sources and cannot overhear the messages transmitted from these sources. All nodes in this model are single-antenna devices and operate on half-duplex (HD) mode. The energy harvesting (EH) and information transmission processes for the model system are shown in Fig.~\ref{fig:2}. In Fig.~\ref{fig:2}, we exploit the first phase to supply the required power to the relay from the best source S to help the relay forward the exchange data afterward. Specifically, the relay R utilizes a power splitter to divide the received signal into two parts, whereas the first part is used for collecting energy and storing it in the battery, while the second part is used for transferring received data. We define $\rho$ as a power splitting ratio, i.e., the ratio of the power received at the relay node used for the EH. Next, the IT process from R to D is conducted in the remaining time period \cite{46,47,48}. 

\begin{figure*} [t]
	\begin{center}
		\includegraphics[width=14cm,height=8cm]{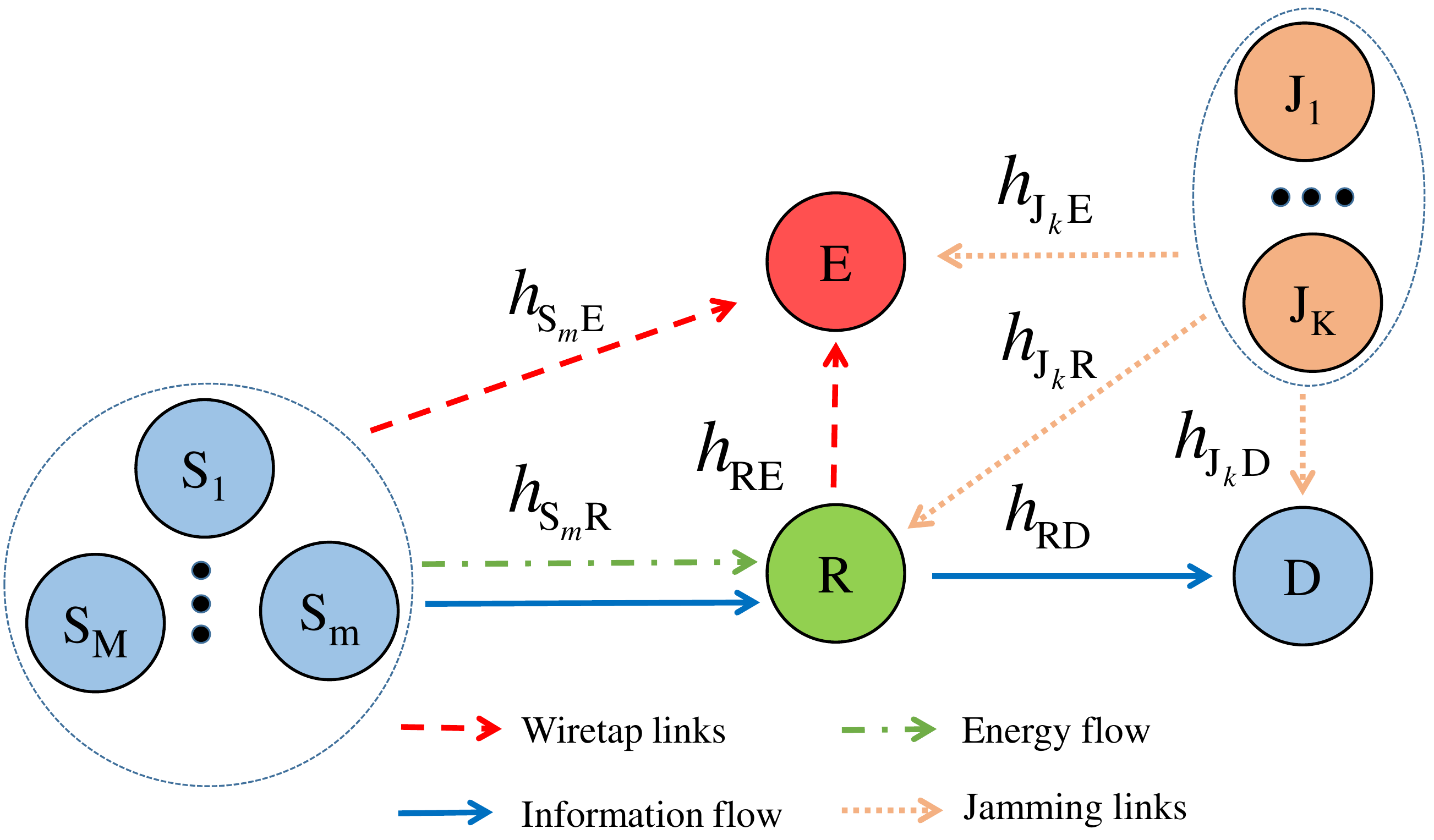} 
	\end{center}
	\caption{A SWIPT-based relay system with friendly jammers against attack from an eavesdropper.}
	\label{fig:1}
\end{figure*}

\begin{figure*} [t]
	\begin{center}
		\includegraphics[trim=0.2cm 0.2cm 0.3cm 0.2cm, clip=true, width=3.2in]{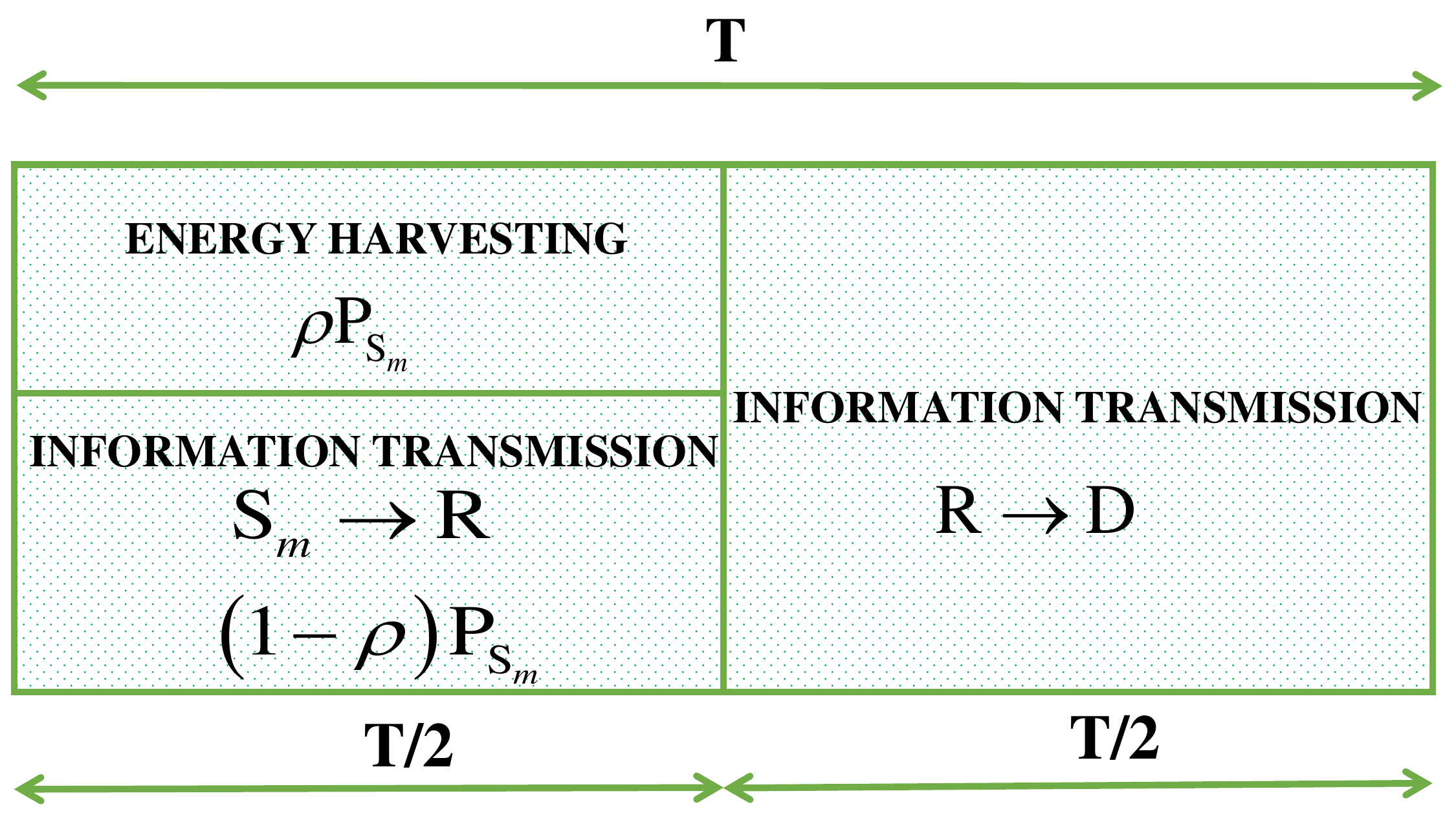} 
	\end{center}
	\caption{EH and IT processes with PS relaying protocol.}
	\label{fig:2}
\end{figure*}
Assuming the channel coefficient between any two nodes follows Rayleigh fading in which the channel is unchanged within a transmission block and to vary independently on different blocks. Let ${h_{{\rm{XY}}}}$ with ${\rm{XY}} \in \left\{ {{\rm{S_mR, RE, RD, JE, JR, JD}}} \right\}$ denote the channel  from $\rm X \to Y$, then the corresponding channel gains can be defined as
\begin{equation} \label{eq:gammaXY}
{\gamma _{{\rm{XY}}}} = {\left| {{h_{{\rm{XY}}}}} \right|^2}, {\rm{XY}} \in \left\{ {{\rm{S_mR, RE, RD, JE, JR, JD}}} \right\}.
\end{equation} Notice that The channel gains are assumed to be exponential random variables in which probability density function (PDF) and cumulative distribution function (CDF) are respectively represented as
\begin{align}
	\label{EQ1}
	{F_{X}}(x) &= 1 - \exp \left( { - \lambda x} \right),\\
	\label{EQ2}
	{f_{X}}(x) &= \frac{{\partial {F_{\rm{X}}}(x)}}{{\partial x}} = \lambda \exp ( - \lambda x).
\end{align}
where $\lambda$ is the rate parameter of the exponential random variable $X$. To take path-loss into account, we can model the parameters as follows:
\begin{align}
	\label{EQ3}
	{\lambda _{{\rm{XY}}}} = {\left( {{d_{{\rm{XY}}}}} \right)^\chi }.
\end{align}
where $\chi $ is the pathloss exponent and ${{d_{{\rm{XY}}}}}$ is distance between node X and Y.

Now, we discuss the adopted system model. As mention above, supposing source $\rm{S}_m$ is chosen to send its information and energy. During the first transmission phase, the received signal at the relay can be given by:
\begin{align}
	\label{EQ4}
	{y_{\rm{R}}} = \sqrt {1 - \rho } {h_{{{\rm{S}}_m}{\rm{R}}}}{x_{{{\rm{S}}_m}}} + \sum\limits_{k = 1}^K {{x_k}{h_{{{\rm{J}}_k}{\rm{R}}}}}  + {n_{\rm{R}}},
\end{align}
where ${x_{{{\rm{S}}_m}}}$ is the message transmitted with $\mathbb{E}\left\{ {{{\left| {{x_{{\rm{S}}_m}}} \right|}^2}} \right\} = {P_{\rm{S}}}$; ${x_k}$ is the artificial noise transmitted by the jammer ${{\rm{J}}_k}$, which satisfies  $\mathbb{E}\left\{ {{{\left| {{x_{{\rm{k}}}}} \right|}^2}} \right\} = {P_{\rm{J}}}$ , $\mathbb{E}\left\{  \cdot  \right\}$ denotes the expectation operation; ${n_{{{\rm{R}}}}}$ is the zero mean additive white Gaussian noise (AWGN) with variance $N_0$. Our proposed model only considers the multi-friendly jammers, which are just against the eavesdropper. Hence, relay R and destination D will know the jamming signals and cancel them in their received signals. So, the received signal at R can be rewritten as:
\begin{align}
	\label{EQ5}
	{y_{\rm{R}}} = \sqrt {1 - \rho } {h_{{{\rm{S}}_m}{\rm{R}}}}{x_{{{\rm{S}}_m}}} + {n_{\rm{R}}}.
\end{align}
Following the same methodology as in \cite{tin2021Sensorsperformance}, the energy harvesting in relay can be computed as 
\begin{align}
	\label{EQ6}
	{P_{\rm{R}}} = \frac{{{E_{\rm{R}}}}}{{T/2}} = \eta \rho {P_{{{\rm{S}}_m}}}{\gamma _{{{\rm{S}}_m}{\rm{R}}}},
\end{align}
where $0 < \eta  \le 1$ is energy conversion efficiency (which takes into account the energy loss by harvesting
circuits and also by decoding and processing circuits); $P_{{\rm {S}}_m}$ and $P_{\rm R}$ are the transmit powers of ${\rm S}_n$ and ${\rm R}$, respectively. $E_{\rm R}$ is the amount of the harvested energy at the relay and it can be calculated as $E_{\rm R} \triangleq \eta \rho {P_{{{\rm{S}}_m}}}{\gamma _{{{\rm{S}}_m}{\rm{R}}}} T/2$. The channel gain ${\gamma _{{{\rm{S}}_m}{\rm{R}}}}$ is defined as ${\gamma _{{{\rm{S}}_m}{\rm{R}}}} =  {\left| {{h_{{{\rm{S}}_m}{\rm{R}}}}} \right|^2}$, which is a special case of \eqref{eq:gammaXY}. Besides, $\rho \in [0,1]$ is the power splitting factor, where $\rho$ equals to zero (or one) signifies that total received RF signals used for EH (or information transmission).

As mentioned above, the received signal at the destination can be given as:
\begin{align}
	\label{EQ7}
	{y_{\rm{D}}} = {h_{{{\rm{R}}}{\rm{D}}}}{x_{{{\rm{R}}}}} + {n_{\rm{D}}},
\end{align}
where $n_{\rm{D}}$ is the zero mean AWGN with variance $N_0$.

In this paper, we consider the AF relaying protocol. Therefore, the signal transmitted by the relay is an amplified version of ${y_{\rm{R}}}$, which is denoted by a factor $\beta$ as follows
\begin{align}
	\label{EQ8}
	\beta  = \frac{{{x_{\rm{R}}}}}{{{y_{\rm{R}}}}} = \sqrt {\frac{{{P_{\rm{R}}}}}{{(1 - \rho ){P_{{{\rm{S}}_m}}}{{\left| {{h_{{{\rm{S}}_m}{\rm{R}}}}} \right|}^2} + {N_0}}}}  \approx \sqrt {\frac{{\eta \rho }}{{1 - \rho }}} .
\end{align}
From \eqref{EQ5} and \eqref{EQ8}, the received signal at D can be rewritten by
\begin{align}
	\label{EQ9}
	{y_{\rm{D}}} &= {h_{{\rm{RD}}}}{x_{\rm{R}}} + {n_{\rm{D}}} = {h_{{\rm{RD}}}}\beta {y_{\rm{R}}} + {n_{\rm{D}}} \notag\\
	&= {h_{{\rm{RD}}}}\beta \left[ {\sqrt {1 - \rho } {h_{{{\rm{S}}_m}{\rm{R}}}}{x_{{{\rm{S}}_m}}} + {n_{\rm{R}}}} \right] + {n_{\rm{D}}}\notag\\
	&= \underbrace {\sqrt {1 - \rho } {h_{{\rm{RD}}}}{h_{{{\rm{S}}_m}{\rm{R}}}}\beta {x_{{{\rm{S}}_m}}}}_{\rm signal} + \underbrace {{h_{{\rm{RD}}}}\beta {n_{\rm{R}}} + {n_{\rm{D}}}}_{\rm noise}.
\end{align}

Hence, the signal to noise ratio (SNR) at D in this phase can be obtained by:
\begin{align}
	\label{EQ10}
	{\gamma _{\rm{D}}} = \frac{{{\rm E}\left\{ {{{\left| {\rm signal} \right|}^2}} \right\}}}{{{\rm E}\left\{ {{{\left| {\rm noise} \right|}^2}} \right\}}} = \frac{{(1 - \rho ){P_{{{\rm{S}}_m}}}{\gamma _{{{\rm{S}}_m}{\rm{R}}}}{\gamma _{{\rm{RD}}}}{\beta ^2}}}{{{\gamma _{{\rm{RD}}}}{\beta ^2}{N_0} + {N_0}}}.
\end{align}

After some algebraic manipulations, we have:
\begin{align}
	\label{EQ11}
	{\gamma _{\rm{D}}} = \frac{{\eta \rho (1 - \rho )\Psi {\gamma _{{{\rm{S}}_m}{\rm{R}}}}{\gamma _{{\rm{RD}}}}}}{{\eta \rho {\gamma _{{\rm{RD}}}} + (1 - \rho )}},
\end{align}
where $\Psi  = \frac{{{P_{\rm{S}}}}}{{{N_0}}}.$ The achievable rate of system can be claimed by:
\begin{align}
	\label{EQ12}
	{C_{{\rm{AF}}}} = \frac{1}{2}{\log _2}\left( {1 + {\gamma _{\rm{D}}}} \right).
\end{align}
Taking into account the impact of eavesdropper E, at the first time slot the chosen source ${\rm{S}_m}$ will broadcast its signal to relay and E can overhear this information. To prevent E from eavesdropping legitimate information, jammers will send the jamming signals to E, so the received signal at E can be expressed by:
\begin{align}
	\label{EQ13}
	y_{\rm{E}}^1 = {h_{{{\rm{S}}_m}{\rm{E}}}}{x_{{{\rm{S}}_m}}} +  {h_{{\rm{RE}}}}{x_{\rm{R}}} + \sum\limits_{k = 1}^K {{x_k}{h_{{{\rm{J}}_k}{\rm{E}}}}}  + {n_{\rm{E}}},
\end{align}
where ${n_{\rm{E}}}^1$ is the zero mean AWGN with variance $N_0$. Based on \eqref{EQ13}, the SNR at the E during first time slot is given as:
\begin{align} \label{eq:gammaE1}
	\gamma _{\rm{E}}^1 = \frac{{{P_{{{\rm{S}}_m}}}{\gamma _{{{\rm{S}}_m}{\rm{E}}}}}}{{{P_{\rm{J}}}\sum\limits_{k = 1}^K {{\gamma _{{{\rm{J}}_k}{\rm{E}}}}}  + {N_0}}} = \frac{{\Psi {\gamma _{{{\rm{S}}_m}{\rm{E}}}}}}{{\Phi \Xi  + 1}} \approx \frac{{\Psi {\gamma _{{{\rm{S}}_m}{\rm{E}}}}}}{{\Phi \Xi }},
\end{align}
where $\Phi  = \frac{{{P_{\rm{J}}}}}{{{N_0}}}$ and $\Xi  = \sum\limits_{k = 1}^K {{\gamma _{{{\rm{J}}_k}{\rm{E}}}}} $. Here, please note that after received the information from chosen source ${{\rm{S}}_m}$, R will amplify this information to D and also to E with the same amplification factor $\beta$. Therefore, the received signal at E can be re-expressed by:
\begin{align}
	\label{EQ14}
		&{y_{\rm{E}}^2}  = {h_{{\rm{RE}}}}\beta \left[ {\sqrt {1 - \rho } {h_{{{\rm{S}}_m}{\rm{R}}}}{x_{{{\rm{S}}_m}}} + {n_{\rm{R}}}} \right] + \sum\limits_{k = 1}^K {{x_k}{h_{{{\rm{J}}_k}{\rm{R}}}}}  + {n_{\rm{E}}^2}\notag\\
		&= \underbrace {\sqrt {1 - \rho } {h_{{{\rm{S}}_m}{\rm{R}}}}{h_{{\rm{RE}}}}\beta {x_{{{\rm{S}}_m}}}}_{signal} + \underbrace {{h_{{\rm{RE}}}}\beta {n_{\rm{R}}} + \sum\limits_{k = 1}^K {{x_k}{h_{{{\rm{J}}_k}{\rm{E}}}}}  + {n_{\rm{E}}^2}}_{noise},
\end{align}
where $n_{\rm{E}}^2$ is the zero mean AWGN with variance $N_0$. Hence, the received SNR at E during second time slot can be obtained as:
\begin{align}
	\label{EQ15}
	{\gamma _{\rm{E}}^2} &= \frac{{(1 - \rho ){\gamma _{{{\rm{S}}_m}{\rm{R}}}}{\gamma _{{\rm{RE}}}}{\beta ^2}{P_{{{\rm{S}}_m}}}}}{{{\gamma _{{\rm{RE}}}}{\beta ^2}{N_0} + {P_{\rm{J}}}\sum\limits_{k = 1}^K {{\gamma _{{{\rm{J}}_k}{\rm{E}}}}}  + {N_0}}} \notag \\ &= \frac{{\eta \rho (1 - \rho ){\gamma _{{{\rm{S}}_m}{\rm{R}}}}{\gamma _{{\rm{RE}}}}\Psi }}{{\eta \rho {\gamma _{{\rm{RE}}}} + \left( {1 - \rho } \right)\Phi \Xi  + \left( {1 - \rho } \right)}}.
\end{align}
Finally, the eavesdropper can apply selection combining technique (SC), thus, the achievable rate at E can be given by:
\begin{align}
	\label{EQ16}
	{C_{\rm{E}}} = \frac{1}{2}{\log _2}\left( {1 + {\gamma _{\rm{E}}}} \right),
\end{align}
where ${\gamma _{\rm{E}}} = \max \left( {\gamma _{\rm{E}}^1,\gamma _{\rm{E}}^2} \right).$

%\underline{{Remark 1:}}
\begin{remark}
The best source $\rm{S}_m$ would be selected for the purpose of enhancement the tranmission performance. Mathematically speaking, we can write as followings
\begin{align}
	\label{EQ17}
	n &\triangleq \arg \underbrace {\max }_{m = 1,2,...,M}\left\{ {{\gamma _{{{\rm{S}}_m}{\rm{R}}}}} \right\} \notag\\ \Leftrightarrow {\gamma _{{{\rm{S}}_m}{\rm{R}}}} &\triangleq \underbrace {\max }_{m = 1,2,...,M}\left\{ {{\gamma _{{{\rm{S}}_m}{\rm{R}}}}} \right\}.
\end{align}

Assume that the links from M sources to the relay R are identical independent distribution (i.i.d), and hence CDF and PDF of ${\gamma _{{{\rm{S}}_m}{\rm{R}}}}$ can be formulated as, respectively.
\begin{align}
	\label{EQ18}
		{F_{{\gamma _{{{\rm{S}}_m}{\rm{R}}}}}}(x) &= \Pr \left( {{\gamma _{{{\rm{S}}_m}{\rm{R}}}} < x} \right) \notag\\&= \Pr \left( {{\gamma _{{{\rm{S}}_m}{\rm{R}}}} < x,\forall m = 1,2,...,M} \right) \notag\\
		&= \prod\limits_{m = 1}^M {{F_{{\gamma _{{{\rm{S}}_m}{\rm{R}}}}}}(x)}  = {\left[ {1 - {e^{ - {\lambda _{{{\rm{S}}_m}{\rm{R}}}}x}}} \right]^M} \notag\\ &= 1 + \sum\limits_{b = 1}^M {{{\left( { - 1} \right)}^b}} C_M^b\exp \left( { - b{\lambda _{{\rm{SR}}}}x} \right),
\end{align}
\begin{align}
	\label{EQ19}
		{f_{{\gamma _{{{\rm{S}}_m}{\rm{R}}}}}}(x) = {\lambda _{{\rm{SR}}}}\sum\limits_{b = 1}^{M - 1} {{{( - 1)}^b}} C_{M - 1}^b\exp \left( { - (b + 1){\lambda _{{\rm{SR}}}}x} \right),
\end{align}
where $C_M^b = \frac{{M!}}{{b!(M - b)!}}.$
\end{remark}

\begin{remark}
Because $\Xi $ is a summation of $K$ i.i.d. exponential random variables, its PDF can be given by \cite{49}
\begin{align}
	\label{EQ20}
	{f_\Xi }(x) = \frac{{{{({\lambda _{{\rm{JE}}}})}^K}}}{{(K - 1)!}}{x^{K - 1}}\exp ( - {\lambda _{{\rm{JE}}}}x).
\end{align}
\end{remark}
\section{Performance Analysis}
\label{sec:3}
This section provides the mathematical analysis of the outage probability (OP)  and Intercept probability (IP) to provide further insight into the two-hop data transmission for SWIPT-based HD AF relay networks in two cases, i.e., dynamic power splitting relaying (DPSR) and static power splitting relaying (SPSR).
\subsection{SPSR case}
\subsubsection{Outage Probability (OP) Analysis}
In this section, the OP of the SWIPT-aided HD AF relaying system over Rayleigh fading channels is derived. Specifically, it can be calculated as:
\begin{align}
	\label{EQ21}
	{\rm{OP}} = \Pr \left( {{C_{AF}} < {C_{th}}} \right) = \Pr \left( {{\gamma _{D}} < {\gamma _{th}}} \right),
\end{align}
where ${\gamma _{th}} = {2^{2{C_{th}}}} - 1$ is the signal-to-noise ratio (SNR) threshold of system and ${C_{th}}$ is the target rate. The closed-form expression for the OP in \eqref{EQ21} is given as follows.
\begin{theorem} \label{Theorem:Outage1}
	In static power splitting-based relaying, the closed-form expression of the OP can be given as
\begin{align}
	\label{eq:Theorem_1}
		{\rm{O}}{{\rm{P}}_{{\rm{SPSR}}}} &= 1 + 2\sum\limits_{b = 1}^M {{{( - 1)}^b}C_M^b} \exp \left( { - \frac{{b{\lambda _{{\rm{SR}}}}{\lambda _{th}}}}{{(1 - \rho )\Psi }}} \right) \notag\\ &\times \sqrt {\frac{{b{\lambda _{{\rm{SR}}}}{\lambda _{{\rm{RD}}}}{\gamma _{th}}}}{{\eta \rho \Psi }}} {K_1}\left( {2\sqrt {\frac{{b{\lambda _{{\rm{SR}}}}{\lambda _{{\rm{RD}}}}{\gamma _{th}}}}{{\eta \rho \Psi }}} } \right),
\end{align}
where ${K_v}\left(  \cdot  \right)$ is the modified Bessel function of the second kind and $v$-th order.
\end{theorem}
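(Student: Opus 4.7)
\textbf{Proof Plan for Theorem~\ref{Theorem:Outage1}.} The plan is to rewrite the outage event in \eqref{EQ21} as a tail event for the source-relay channel gain $\gamma_{S_mR}$ conditional on $\gamma_{RD}$, average out $\gamma_{RD}$ against its exponential density, and recognize the resulting one-dimensional integral as the standard Bessel representation.

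First, I would plug the end-to-end SNR \eqref{EQ11} into \eqref{EQ21} and rearrange the inequality $\gamma_D<\gamma_{th}$ to isolate $\gamma_{S_mR}$. Since $\eta\rho(1-\rho)\Psi\gamma_{S_mR}\gamma_{RD}<\gamma_{th}(\eta\rho\gamma_{RD}+(1-\rho))$ is equivalent (for $\gamma_{RD}>0$) to
\begin{equation*}
\gamma_{S_mR}<\frac{\gamma_{th}}{(1-\rho)\Psi}+\frac{\gamma_{th}}{\eta\rho\Psi\,\gamma_{RD}},
\end{equation*}
the outage probability becomes $\mathrm{OP}_{\mathrm{SPSR}}=\mathbb{E}_{\gamma_{RD}}\!\bigl[F_{\gamma_{S_mR}}\!\bigl(\tfrac{\gamma_{th}}{(1-\rho)\Psi}+\tfrac{\gamma_{th}}{\eta\rho\Psi\,\gamma_{RD}}\bigr)\bigr]$.

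Next I would insert the closed-form CDF of the best-source gain from \eqref{EQ18} and the exponential PDF of $\gamma_{RD}$. The constant term contributes $1$ after integration, while each term in the finite sum yields, after pulling the $\gamma_{RD}$-independent exponential out in front, an integral of the type
\begin{equation*}
\int_{0}^{\infty}\exp\!\left(-\frac{A}{y}-B y\right)dy,\qquad A,B>0,
\end{equation*}
with $A=\tfrac{b\lambda_{SR}\gamma_{th}}{\eta\rho\Psi}$ and $B=\lambda_{RD}$. I would then invoke the standard identity $\int_{0}^{\infty}e^{-A/y-By}\,dy=2\sqrt{A/B}\,K_{1}(2\sqrt{AB})$ (e.g., Gradshteyn--Ryzhik 3.324.1) to collapse this into a closed form, multiply by $\lambda_{RD}$, and simplify to obtain the $K_{1}$ term in \eqref{eq:Theorem_1}. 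Combining the contributions of $b=1,\dots,M$ then yields the stated expression.

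The main obstacle is modest: I expect the only non-routine step to be the correct application of the Bessel-function integral, including bookkeeping of the constants so that the argument $2\sqrt{b\lambda_{SR}\lambda_{RD}\gamma_{th}/(\eta\rho\Psi)}$ of $K_{1}$ and the matching prefactor $\sqrt{b\lambda_{SR}\lambda_{RD}\gamma_{th}/(\eta\rho\Psi)}$ emerge exactly as in \eqref{eq:Theorem_1}. Care is also needed to keep the ``$1-\rho$'' piece (from the path-loss-like offset $\tfrac{\gamma_{th}}{(1-\rho)\Psi}$ in the threshold for $\gamma_{S_mR}$) outside the integral as the leading exponential factor, which is the term shown as $\exp(-b\lambda_{SR}\gamma_{th}/((1-\rho)\Psi))$ in the theorem. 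All remaining manipulations are algebraic.
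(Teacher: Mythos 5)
Your proposal is correct and follows essentially the same route as the paper's own proof in Appendix~A: conditioning on $\gamma_{\rm RD}$, inserting the best-source CDF from \eqref{EQ18} together with the exponential PDF of $\gamma_{\rm RD}$, and evaluating the resulting integral $\int_0^\infty \exp(-A/y-By)\,dy$ via \cite[3.324.1]{gradshteyn2014} to obtain the $K_1$ term. The splitting of the threshold into the $\gamma_{\rm RD}$-independent factor $\exp\left(-\frac{b\lambda_{\rm SR}\gamma_{th}}{(1-\rho)\Psi}\right)$ and the $1/\gamma_{\rm RD}$ part is exactly the manipulation the paper performs in \eqref{eq:FsmR}--\eqref{EQ23}.
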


\begin{proof}
The detailed proof is available in Appendix~\ref{Appendix:Outage1}.
\end{proof}

\subsubsection{Intercept Probability (IP) Analysis}
%\textbf{3.1.2.1. Exact Analysis}
%\vspace{0.5cm}
Destination D  will be intercepted if E can successfully wiretap signal, i.e. ${C_E} \ge {C_{th}}$. Therefore, the IP can be defined as \cite{9131891}
\begin{align}
	\label{EQ25}
	{\rm{IP = Pr}}\left( {{C_{\rm{E}}} \ge {C_{th}}} \right) = \Pr \left( {{\gamma _{\rm{E}}} \ge {\gamma _{th}}} \right).
\end{align}

\begin{theorem} \label{Theorem:IP}
	The closed-form expression of the IP in this case can be derived by:
\begin{align}
		\label{eq:Theorem_2}
		{\rm{I}}{{\rm{P}}_{{\rm{SPSR}}}}  &= {\left( {\frac{{{\lambda _{{\rm{JE}}}}}}{{{{\tilde \lambda }_{{\rm{JE}}}}}}} \right)^K} + \sum\limits_{t = 0}^\infty  \sum\limits_{b = 1}^M {\frac{{{{\left( { - 1} \right)}^{b + t}}C_M^b\Gamma \left( {t + K} \right)}}{{t!{\Phi ^{t + K}}}}} \notag\\ & \times \exp \left( { - \frac{{b{\lambda _{{\rm{SR}}}}{\gamma _{th}}}}{{{\rho _1}\Psi }}} \right) \notag\\
		&\times \frac{{{{\left( {{\lambda _{{\rm{JE}}}}} \right)}^K}}}{{\left( {K - 1} \right)!}} \times G_{1,3}^{3,0}\left( {\frac{{b{\lambda _{{\rm{SR}}}}{\lambda _{{\rm{RE}}}}{\gamma _{th}}}}{{\eta \rho \Psi }}\left| \begin{array}{l}
				0\\
				- t - K,1,0	\end{array} \right.} \right) \notag\\& \times \left[ {{{\left( {{{\tilde \lambda }_{{\rm{JE}}}}} \right)}^t} - {{\left( {{\lambda _{{\rm{JE}}}}} \right)}^t}} \right],
\end{align}
where $\Gamma \left(  \cdot  \right)$ is the Gamma function and ${\rm{G}}_{p,q}^{m,n}\left( {z\left| \begin{array}{l}
		{a_1},...,{a_p}\\
		{b_1},...,{b_q}
	\end{array} \right.} \right)$ is the Meijer G-function.
\end{theorem}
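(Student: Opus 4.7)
My plan is to decompose the union $\{\gamma_{\rm E}\geq\gamma_{th}\} = \{\gamma_{\rm E}^1\geq\gamma_{th}\}\cup\{\gamma_{\rm E}^2\geq\gamma_{th}\}$ and to exploit the conditional independence of the two constituent events given the aggregate jamming sum $\Xi$: from \eqref{eq:gammaE1} and \eqref{EQ15}, $\gamma_{\rm E}^1$ involves only $\gamma_{{\rm S}_m{\rm E}}$ while $\gamma_{\rm E}^2$ involves only $\gamma_{{\rm S}_m{\rm R}}$ and $\gamma_{\rm RE}$, all of which are jointly independent of each other and of $\Xi$. I write $\mathrm{IP} = \Pr(\gamma_{\rm E}^1\geq\gamma_{th}) + \Pr(\gamma_{\rm E}^1<\gamma_{th},\,\gamma_{\rm E}^2\geq\gamma_{th})$. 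The first term is easy: conditioning on $\Xi$ and using the exponential law of $\gamma_{{\rm S}_m{\rm E}}$ yields $\exp(-\lambda_{\rm SE}\Phi\gamma_{th}\Xi/\Psi)$, and integrating this Laplace kernel against the Gamma PDF \eqref{EQ20} of $\Xi$ reproduces exactly the leading term $(\lambda_{\rm JE}/\tilde\lambda_{\rm JE})^K$ with $\tilde\lambda_{\rm JE}:=\lambda_{\rm JE}+\lambda_{\rm SE}\Phi\gamma_{th}/\Psi$.

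For the joint event I would first evaluate $\Pr(\gamma_{\rm E}^2\geq\gamma_{th}\mid\Xi,\gamma_{\rm RE})$ by inverting \eqref{EQ15} into a lower bound on $\gamma_{{\rm S}_m{\rm R}}$ and substituting this bound into the complementary CDF implied by \eqref{EQ18}; this produces
\begin{equation*}
\sum_{b=1}^{M}(-1)^{b+1}C_M^b\exp\!\Bigl(-\tfrac{b\lambda_{\rm SR}\gamma_{th}}{(1-\rho)\Psi}-\tfrac{b\lambda_{\rm SR}\gamma_{th}\Phi\,\Xi}{\eta\rho\Psi\,\gamma_{\rm RE}}\Bigr).
\end{equation*}
Averaging over $\gamma_{\rm RE}\sim\exp(\lambda_{\rm RE})$ yields inner integrals of the Bessel form $\int_0^\infty\exp(-ax-c/x)\,dx$; I would rewrite these via the Meijer $G$ representation of $K_1(\cdot)$ so that the $\Xi$--dependence, carried by $c\propto\Xi$, is preserved in a clean Mellin--Barnes form that can still be integrated against the Gamma law of $\Xi$ afterwards.

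I then recombine by conditional independence as $\Pr(\gamma_{\rm E}^1<\gamma_{th},\gamma_{\rm E}^2\geq\gamma_{th}\mid\Xi) = [1-e^{-\lambda_{\rm SE}\Phi\gamma_{th}\Xi/\Psi}]\Pr(\gamma_{\rm E}^2\geq\gamma_{th}\mid\Xi)$ and integrate against the PDF \eqref{EQ20}. The factor $1-e^{-\lambda_{\rm SE}\Phi\gamma_{th}\Xi/\Psi}$ splits the outer expectation into two Gamma-type integrals whose effective rate parameters are $\lambda_{\rm JE}$ and $\tilde\lambda_{\rm JE}$; this is the origin of the bracket $[(\tilde\lambda_{\rm JE})^t-(\lambda_{\rm JE})^t]$ in the stated formula. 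Expanding the inner Meijer $G$ as a series in $\Xi$ indexed by $t$ and applying $\int_0^\infty x^{t+K-1}e^{-\mu x}\,dx=\Gamma(t+K)/\mu^{t+K}$ separately to each branch produces the $(-1)^t/t!$ sign, the $\Gamma(t+K)/\Phi^{t+K}$ prefactor, the normalisation $\lambda_{\rm JE}^K/(K-1)!$ from the Gamma density, and the lower-parameter list $(-t-K,1,0)$ of the residual Meijer $G$ in the final formula.

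The main obstacle will be precisely this last step: keeping the Meijer $G$ representation consistent through both outer Gamma integrals and extracting a series expansion in which the index $t$ is absorbed cleanly into the lower parameters of the final $G$-function. The cleanest tool is the standard Mellin-type identity $\int_0^\infty x^{\alpha-1}e^{-\mu x}G^{m,n}_{p,q}(\beta x\mid\cdot)\,dx$, which mechanically produces the quoted closed form once it is applied branch by branch; careful bookkeeping of the $\Phi^{t+K}$, $\lambda_{\rm JE}^K/(K-1)!$, and source-selection constants is where most of the work actually lies.
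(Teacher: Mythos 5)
Your skeleton is sound and matches the paper's: your decomposition $\Pr(\gamma_{\rm E}^1\geq\gamma_{th})+\Pr(\gamma_{\rm E}^1<\gamma_{th},\gamma_{\rm E}^2\geq\gamma_{th})$ is algebraically identical to the paper's $1-\mathbb{E}_\Xi\{Q_1(\Xi)Q_2(\Xi)\}$ (since $1-Q_1Q_2=(1-Q_1)+Q_1(1-Q_2)$), the conditional-independence argument given $\Xi$ is the same, and your leading term $\left(\lambda_{\rm JE}/\tilde\lambda_{\rm JE}\right)^K$ is exactly the paper's $\Upsilon_1$. However, there are two concrete gaps. First, your conditional probability for the relay-hop wiretap event drops the noise term: inverting \eqref{EQ15} gives the threshold $\gamma_{{\rm S}_m{\rm R}}\geq\frac{\gamma_{th}\left(\eta\rho\gamma_{\rm RE}+(1-\rho)\Phi\Xi+(1-\rho)\right)}{\eta\rho(1-\rho)\Psi\gamma_{\rm RE}}$, so the exponent must contain $-\frac{b\lambda_{\rm SR}\gamma_{th}(\Phi\Xi+1)}{\eta\rho\Psi\gamma_{\rm RE}}$ rather than your $-\frac{b\lambda_{\rm SR}\gamma_{th}\Phi\Xi}{\eta\rho\Psi\gamma_{\rm RE}}$. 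This is not cosmetic: after averaging over $\gamma_{\rm RE}$ via \cite[3.324.1]{gradshteyn2014} the Bessel argument carries $(\Phi\Xi+1)$, and it is precisely the substitution $y=\Phi x+1$ that later produces the $\Phi^{t+K}$ prefactor, the integral $\int_1^\infty y^{1/2}(y-1)^{t+K-1}K_1(2\sqrt{cy})\,dy$, and thence (via \cite[6.592.4]{gradshteyn2014}) the $G_{1,3}^{3,0}\left(\cdot\left|\begin{array}{l}0\\-t-K,1,0\end{array}\right.\right)$ in \eqref{eq:Theorem_2}; with $c\propto\Xi$ alone you land on a different closed form.

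Second, your described final step would not reproduce the stated structure. The bracket $\left[\tilde\lambda_{\rm JE}^{\,t}-\lambda_{\rm JE}^{\,t}\right]$ and the factor $\Gamma(t+K)/\Phi^{t+K}$ do \emph{not} come from expanding the Meijer $G$ in $\Xi$ and applying $\int_0^\infty x^{t+K-1}e^{-\mu x}dx=\Gamma(t+K)/\mu^{t+K}$; that route would place $\lambda_{\rm JE}^{t+K}$ and $\tilde\lambda_{\rm JE}^{t+K}$ in the \emph{denominators}, and a series expansion of $K_1$ (or of the equivalent $G$-function) in its argument also brings logarithmic terms that spoil a clean power series. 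The paper instead Taylor-expands the exponential kernels, $e^{-\tilde\lambda_{\rm JE}x}=\sum_{t\ge0}\frac{(-1)^t\tilde\lambda_{\rm JE}^{\,t}}{t!}x^t$ and likewise for $e^{-\lambda_{\rm JE}x}$ (these two branches being your split of $1-e^{-a\Xi}$), which puts $\tilde\lambda_{\rm JE}^{\,t}$ and $\lambda_{\rm JE}^{\,t}$ in the numerator; it then changes variables $y=\Phi x+1$ so that $x^{t+K-1}dx$ yields $(y-1)^{t+K-1}dy/\Phi^{t+K}$, and applies \cite[6.592.4]{gradshteyn2014} to obtain $\Gamma(t+K)$ and the lower-parameter list $(-t-K,1,0)$, with the factor $\left(2\sqrt{b\lambda_{\rm SR}\lambda_{\rm RE}\gamma_{th}/(\eta\rho\Psi)}\right)^{-1}$ cancelling the square-root prefactor of $Q_2$. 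Your closing appeal to the Mellin identity $\int_0^\infty x^{\alpha-1}e^{-\mu x}G^{m,n}_{p,q}(\beta x\mid\cdot)\,dx$ does not apply directly either, because the $G$-function argument is affine in $x$ (through $\Phi x+1$), not proportional to $x$; so the step you yourself flag as the main obstacle is indeed where the proposal fails to reach \eqref{eq:Theorem_2}.
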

\begin{proof}
The detailed proof is available in Appendix~\ref{Appendix:IP}.
\end{proof}

\subsection{DPSR case} \label{Sec:DPSRcase}
In this section, we would like to find the optimal power splitting factor, i.e., $\rho^\star$ to maximize the achievable rate $C_{\rm AF}$. By observing \eqref{EQ12}, we have $\max \left( {{C_{{\rm{AF}}}}} \right) \Leftrightarrow \max \left( {{\gamma _{\rm{D}}}} \right)$. It is easy to prove that $\frac{{{\partial ^2}{\gamma _{\rm{D}}}}}{{{\partial ^2}\rho }}$ is negative for all $0 < \rho  < 1$. Hence, we conclude that ${\gamma _{\rm{D}}}$ is a concave function of $\rho$. We can find the value of $\rho$ to maximize ${\gamma _{\rm{D}}}$ by differentiating ${\gamma _{\rm{D}}}$ concerning $\rho$ and then equate it to zero. After doing some algebraic calculations, $\rho^\star$ can be given as ${\rho ^*} = \frac{1}{{1 + \left| {{h_{{\rm{RD}}}}} \right|\sqrt \eta  }}$ or ${\rho ^*} = \frac{1}{{1 - \left| {{h_{{\rm{RD}}}}} \right|\sqrt \eta  }}$. Because of ${\rho ^*} = \frac{1}{{1 - \left| {{h_{{\rm{RD}}}}} \right|\sqrt \eta  }}$ results in the value of ${\rho ^*} > 1$ or ${\rho ^*} < 0$. Therefore, ${\rho ^*} = \frac{1}{{1 + \left| {{h_{{\rm{RD}}}}} \right|\sqrt \eta  }}$ is selected as the optimal solution. 

\begin{theorem} \label{Theorem:Outage2}
The closed-form expression of the OP in this case can be derived by:
\begin{multline}
	\vspace{-0.1cm}
	\label{eq:Theorem_3}
	{\rm{O}}{{\rm{P}}_{{\rm{DPSR}}}} = 1 + \sum\limits_{t = 0}^\infty  \sum\limits_{b = 1}^M \frac{{{{( - 1)}^{t + b}}C_M^b{2^{t + 1}}}}{{t!}} \\ \times {{\left( {\frac{{{\lambda _{{\rm{RD}}}}}}{\eta }} \right)}^{t/4 + 1/2}} {{\left( {\frac{{b{\lambda _{{\rm{SR}}}}{\gamma _{th}}}}{\Psi }} \right)}^{3t/4 + 1/2}}  \\
	\times \exp \left( { - \frac{{b{\lambda _{{\rm{SR}}}}{\gamma _{th}}}}{\Psi }} \right) \times {K_{ - t/2 + 1}}\left( {2\sqrt {\frac{{b{\lambda _{{\rm{SR}}}}{\lambda _{{\rm{RD}}}}{\gamma _{th}}}}{{\eta \Psi }}} } \right),
\end{multline}
where ${K_v}\left(  \cdot  \right)$ is the modified Bessel function of the second kind and ${v^{th}}$ order.
\end{theorem}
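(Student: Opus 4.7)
The plan is to plug the optimal splitting ratio $\rho^\star=1/(1+|h_{\rm RD}|\sqrt{\eta})$ from Section~\ref{Sec:DPSRcase} into the expression \eqref{EQ11} for $\gamma_{\rm D}$, simplify, and then compute the resulting probability by conditioning on $\gamma_{\rm RD}$ and integrating against its exponential density. To start, I would let $u\triangleq |h_{\rm RD}|\sqrt{\eta}$ so that $\gamma_{\rm RD}=u^{2}/\eta$, $\eta\rho^\star=\eta/(1+u)$, and $1-\rho^\star=u/(1+u)$. A short algebraic simplification of the numerator and denominator of \eqref{EQ11} collapses the denominator to $u$, leaving
\begin{equation*}
\gamma_{\rm D}^\star=\frac{\eta\,\gamma_{\rm RD}\Psi\,\gamma_{{\rm S}_m{\rm R}}}{\bigl(1+\sqrt{\eta\,\gamma_{\rm RD}}\bigr)^{2}}.
\end{equation*}

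Next, I would rewrite the outage event $\gamma_{\rm D}^\star<\gamma_{th}$ as $\gamma_{{\rm S}_m{\rm R}}<\gamma_{th}(1+\sqrt{\eta\,\gamma_{\rm RD}})^{2}/(\eta\Psi\,\gamma_{\rm RD})$, condition on $\gamma_{\rm RD}=x$, and plug in the order statistic CDF \eqref{EQ18} to obtain
\begin{equation*}
{\rm OP}_{\rm DPSR}=1+\sum_{b=1}^{M}(-1)^{b}C_{M}^{b}\,\lambda_{\rm RD}\!\int_{0}^{\infty}\!\exp\!\Bigl(-\lambda_{\rm RD}x-\tfrac{b\lambda_{\rm SR}\gamma_{th}(1+\sqrt{\eta x})^{2}}{\eta\Psi x}\Bigr)dx.
\end{equation*}
Expanding $(1+\sqrt{\eta x})^{2}=1+2\sqrt{\eta x}+\eta x$ in the inner exponent splits it into three pieces: a constant $b\lambda_{\rm SR}\gamma_{th}/\Psi$ that factors out, the classical $x$ and $1/x$ terms, and a troublesome $1/\sqrt{x}$ term.

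The main obstacle is precisely that $1/\sqrt{x}$ piece, which prevents direct use of any standard Bessel integral. My idea is to Taylor expand
\begin{equation*}
\exp\!\Bigl(-\tfrac{2b\lambda_{\rm SR}\gamma_{th}}{\Psi\sqrt{\eta x}}\Bigr)=\sum_{t=0}^{\infty}\frac{(-1)^{t}}{t!}\Bigl(\tfrac{2b\lambda_{\rm SR}\gamma_{th}}{\Psi\sqrt{\eta}}\Bigr)^{t}x^{-t/2},
\end{equation*}
interchange sum and integral, and then apply the identity $\int_{0}^{\infty}x^{\nu-1}e^{-ax-c/x}dx=2(c/a)^{\nu/2}K_{\nu}(2\sqrt{ac})$ with $\nu=1-t/2$, $a=\lambda_{\rm RD}$, and $c=b\lambda_{\rm SR}\gamma_{th}/(\eta\Psi)$. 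This produces the modified Bessel function $K_{-t/2+1}(2\sqrt{b\lambda_{\rm SR}\lambda_{\rm RD}\gamma_{th}/(\eta\Psi)})$ and the advertised prefactors.

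The remaining bookkeeping is just collecting powers: the $\lambda_{\rm RD}^{1}$ out front combines with $(c/a)^{\nu/2}$ to give $\lambda_{\rm RD}^{t/4+1/2}$; the factors of $b\lambda_{\rm SR}\gamma_{th}/\Psi$ from the Taylor term and from $(c/a)^{\nu/2}$ combine to $(b\lambda_{\rm SR}\gamma_{th}/\Psi)^{3t/4+1/2}$; and the powers of $\eta$ assemble as $\eta^{-t/4-1/2}$, so the two $\eta$-containing factors merge into $(\lambda_{\rm RD}/\eta)^{t/4+1/2}$. Combining these pieces with the constant exponential $\exp(-b\lambda_{\rm SR}\gamma_{th}/\Psi)$, the binomial sum over $b$, and the factor $2$ from the Bessel identity yields exactly \eqref{eq:Theorem_3}. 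A brief justification of the dominated convergence used to swap sum and integral (based on uniform decay in $x$ of the Gaussian-like integrand away from zero) would complete the proof.
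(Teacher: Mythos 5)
Your proposal is correct and follows essentially the same route as the paper's own proof in Appendix~C: substitute $\rho^\star$ into \eqref{EQ11}, condition on $\gamma_{\rm RD}$ with the order-statistic CDF \eqref{EQ18}, Taylor-expand the $\exp(-c/\sqrt{x})$ factor, and evaluate the remaining integral via the standard identity $\int_0^\infty x^{\nu-1}e^{-ax-c/x}dx=2(c/a)^{\nu/2}K_\nu(2\sqrt{ac})$ (the paper's cited \cite[3.471.9]{gradshteyn2014}), with matching exponent bookkeeping. The only addition is your remark on justifying the sum--integral interchange, which the paper omits.
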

\begin{proof}
The detailed proof is available in Appendix~\ref{Appendix:Outage2}.
\end{proof}
\subsubsection{IP Analysis}
\begin{theorem}\label{Theorem:IPv2}
	In dynamic power splitting-based relaying, the exact integral-form expression of the IP is derived as:
\begin{align}
	\label{eq:Theorem_4}
		{\rm{I}}{{\rm{P}}_{{\rm{DPSR}}}} &= {\left( {\frac{{{\lambda _{{\rm{JE}}}}}}{{{{\tilde \lambda }_{{\rm{JE}}}}}}} \right)^K} + 2\sum\limits_{b = 1}^M \frac{{{{\left( { - 1} \right)}^b}C_M^b{{\left( {{\lambda _{{\rm{JE}}}}} \right)}^K}}}{{\left( {K - 1} \right)!}} \notag \\
 &\times \exp \left( { - \frac{{b{\lambda _{{\rm{SR}}}}{\gamma _{th}}}}{\Psi }} \right) \times \sqrt {\frac{{b{\lambda _{{\rm{SR}}}}{\lambda _{{\rm{RE}}}}{\gamma _{th}}}}{{\eta \Psi }}}  \notag\\
		&\times \Bigg[ \int\limits_0^\infty  {{x^{K - 1}}\exp \left( { - {{\tilde \lambda }_{{\rm{JE}}}}x} \right)\Delta (\omega )\sqrt {\left( {\Phi x + 1} \right)} } dx \notag\\
		&- \int\limits_0^\infty  {{x^{K - 1}}\exp \left(  - {\lambda _{{\rm{JE}}}}x \right)\Delta (\omega )\sqrt {\left( {\Phi x + 1} \right)} } dx \Bigg].
\end{align}
%where ${\rm{G}}_{p,q}^{m,n}\left( {z\left| \begin{array}{l}
%		{a_1},...,{a_p}\\
%		{b_1},...,{b_q}
%	\end{array} \right.} \right)$ is the Meijer G-function.
\end{theorem}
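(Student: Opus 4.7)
The plan is to exploit the selection-combining identity $\gamma_E=\max(\gamma_E^1,\gamma_E^2)$ and write
\[
\mathrm{IP}_{\mathrm{DPSR}}=\Pr\bigl(\gamma_E^1\ge\gamma_{th}\bigr)+\Pr\bigl(\gamma_E^1<\gamma_{th},\,\gamma_E^2\ge\gamma_{th}\bigr),
\]
handling each term in turn, following the template of the SPSR analysis in Theorem~\ref{Theorem:IP}. The first term closes after a single Gamma integration: conditioning on $\Xi$ in \eqref{eq:gammaE1} gives $\Pr(\gamma_E^1\ge\gamma_{th}\mid\Xi)=\exp(-\lambda_{S_mE}\gamma_{th}\Phi\Xi/\Psi)$, and averaging against \eqref{EQ20} with the shorthand $\tilde\lambda_{JE}:=\lambda_{JE}+\lambda_{S_mE}\gamma_{th}\Phi/\Psi$ produces exactly the $(\lambda_{JE}/\tilde\lambda_{JE})^K$ leading term of \eqref{eq:Theorem_4}.

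For the joint event I would condition jointly on $\Xi=x$ and on $\gamma_{RD}$, which freezes the optimal splitting factor $\rho^\star=1/(1+|h_{RD}|\sqrt\eta)$ derived in Section~\ref{Sec:DPSRcase} and renders the random variables feeding $\gamma_E^1$ and $\gamma_E^2$ conditionally independent. Then $\Pr(\gamma_E^1<\gamma_{th}\mid x)=1-\exp(-\lambda_{S_mE}\gamma_{th}\Phi x/\Psi)$, and multiplying by the complementary CDF $\Pr(\gamma_E^2\ge\gamma_{th}\mid x,\gamma_{RD})$ naturally decomposes the remaining expectation into a difference of two $x$-integrals whose kernels, after multiplication by the Gamma PDF \eqref{EQ20}, pick up exactly the exponential factors $\exp(-\lambda_{JE}x)$ and $\exp(-\tilde\lambda_{JE}x)$ visible inside the brackets of \eqref{eq:Theorem_4}.

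The inner complementary CDF is obtained by substituting $\rho^\star$ into \eqref{EQ15}, recasting $\gamma_E^2\ge\gamma_{th}$ as an inequality in $\gamma_{RE}$ for fixed $\gamma_{S_mR}$, and then averaging successively over the exponential PDF of $\gamma_{RE}$ and the best-source PDF \eqref{EQ19}. The identity $\int_0^\infty\exp(-ax-b/x)\,dx=2\sqrt{b/a}\,K_1(2\sqrt{ab})$, the same tool driving Theorems~\ref{Theorem:Outage1} and \ref{Theorem:Outage2}, supplies the prefactor $2\sqrt{b\lambda_{SR}\lambda_{RE}\gamma_{th}/(\eta\Psi)}$, while the binomial expansion of the best-source CDF \eqref{EQ18} contributes $\sum_{b=1}^{M}(-1)^b C_M^b\exp(-b\lambda_{SR}\gamma_{th}/\Psi)$. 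The residual jammer contribution $\Phi\Xi+1$ left over from the denominator of \eqref{EQ15} is absorbed as the factor $\sqrt{\Phi x+1}$ inside the surviving $x$-integral, with the $\gamma_{RD}$-averaged Bessel piece carried symbolically as $\Delta(\omega)$.

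The hard part, and the reason the statement gives an exact integral form rather than a fully closed expression as in Theorem~\ref{Theorem:Outage2}, is that the $\gamma_{RD}$-dependence of $\rho^\star$ enters the Bessel argument together with the jammer-induced coupling $\Phi x+1$. This blocks the Gradshteyn--Ryzhik collapse into a Meijer-G form that was available in the SPSR case of Theorem~\ref{Theorem:IP}, so I would stop the evaluation at the one-dimensional $x$-integral; because the kernels decay as $\exp(-\lambda_{JE}x)$ and $\exp(-\tilde\lambda_{JE}x)$, the surviving integrals remain well-behaved for efficient numerical computation.
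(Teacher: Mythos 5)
Your proposal is correct and follows essentially the same route as the paper: your split $\Pr(\gamma_{\rm E}^1\ge\gamma_{th})+\Pr(\gamma_{\rm E}^1<\gamma_{th},\,\gamma_{\rm E}^2\ge\gamma_{th})$ is algebraically equivalent to the paper's $1-\int_0^\infty Q_1(x)\,Q_2^*(x)\,f_\Xi(x)\,dx$, and both evaluate the eavesdropper's second-hop term by substituting $\rho^\star$, expanding the best-source distribution into the $\sum_b(-1)^bC_M^b$ form, applying the $\int_0^\infty e^{-ax-b/x}dx=2\sqrt{b/a}\,K_1(2\sqrt{ab})$ identity, and leaving the residual $\gamma_{\rm RD}$- and $\Xi$-averages as the $\Delta(\omega)$ and $x$-integrals in \eqref{eq:Theorem_4}. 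The only cosmetic difference is ordering: the paper recasts the event in $\gamma_{{\rm S}_m{\rm R}}$ via the CDF \eqref{EQ18} and then integrates over $\gamma_{\rm RE}$, whereas you describe the reversed conditioning, which leads to the same expression.
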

We have obtained the analytical results in Theorems~\ref{Theorem:Outage1}--\ref{Theorem:IPv2}, which are independent of small-scale fading coefficients and only based on the statistical CSI. The obtained analyses work for a long period of time and reduce the computational complexity in evaluating the OP and IP of the network.  
\begin{remark}
This paper considers scenarios where the jammers are closer to the eavesdropper than the relay. Hence, the RF transmission from friendly jammers was neglected for the sake of simplicity. Once the information of jammers is carefully exploited, both eavesdropper (E) and destination (D) will get benefits to enhance the system performance. A framework using the RF transmission from friendly jammers shares a similar analytical methodology as initially established in this paper but with more complicated in detail. This interesting extension is left for future work.
\end{remark} 
%\begin{figure} [t]
%	\begin{center}
%		\includegraphics[width=9.5cm,height=8cm]{Figs/Fig3.eps} 
%	\end{center}
%	\caption{OP versus $\Psi$ with ${C_{th}}$=0.5 bps/Hz, $\eta$=0.8, $M$=2.}
%	\label{fig:3}
%\end{figure}
%\begin{figure} [t]
%	\begin{center}
%		\includegraphics[width=9.5cm,height=8cm]{} 
%	\end{center}
%	\caption{IP versus $\Psi$ with ${C_{th}}$=0.5 bps/Hz, $\eta$=0.8, $M$=2, $K$=1 and $\Phi$=1 dB.}
%	\label{fig:4}
%\end{figure}
\begin{table*}[t]
	\caption{Simulation parameters}
	\label{Table_1}
	\centering
	\setlength{\tabcolsep}{5pt}
	%|l|l|l|l|
	\begin{tabular}{|>{\centering\arraybackslash} m{2cm} | >{\centering\arraybackslash} m{5cm}| >{\centering\arraybackslash} m{3cm}| >{\centering\arraybackslash} m{3cm}|}
		\hline \hline
		\textbf{Symbol} & \textbf{Parameter name} & \textbf{Fixed value} & \textbf{Varying range} \\  \hline\hline
		${{\rm{C}}_{th}}$&    Source rate&    0.25; 0.5 bps/Hz& none\\
		$\eta$& Energy harvesting efficiency& 0.8& none\\
		$\rho$& Power splitting factor& 0.225; 0.325; 0.5; 0.875; 0.915& 0 to 1\\
		$\lambda_{\rm{S_mR}}$& Rate parameter of ${\left| {{h_{{{\rm{S}}_m}{\rm{R}}}}} \right|^2}$ & 0.1768& none\\
		$\lambda_{\rm{RD}}$& Rate parameter of ${\left| {{h_{{\rm{RD}}}}} \right|^2}$ & 0.1768& none \\ 
		$\lambda_{\rm{RE}}$& Rate parameter of ${\left| {{h_{{\rm{RE}}}}} \right|^2}$ & 2.7557 ($\rm{S}1$); 1.3216 ($\rm{S}2$) & none\\ 
		$\lambda_{\rm{JE}}$& Rate parameter of ${\left| {{h_{{\rm{JE}}}}} \right|^2}$ & 0.1768 (S1); 1 (S2)& none\\ 
		$\lambda_{\rm{SE}}$ & Rate parameter of ${\left| {{h_{{\rm{SE}}}}} \right|^2}$ &  3.1434 (S1); 1 (S2) & none\\  
		$\Psi$& Transmit-power-to-noise-ratio of source & 2 dB& -5 to 15 (dB) \\ 
		$\Phi $& Transmit-power-to-noise-ratio of jammer & -1;1 dB& none \\ 
		$M$ & Number of source node(s) & 2; 3 & 1 to 10 \\ 
		$K$ & Number of jammer(s) & 1 & 1 to 10 \\ \hline
		\hline
	\end{tabular}
\end{table*}
\begin{figure*}[t]
	\centering
	\begin{minipage}{.48\textwidth}
		\centering
		\includegraphics[width=9.5cm,height=8cm]{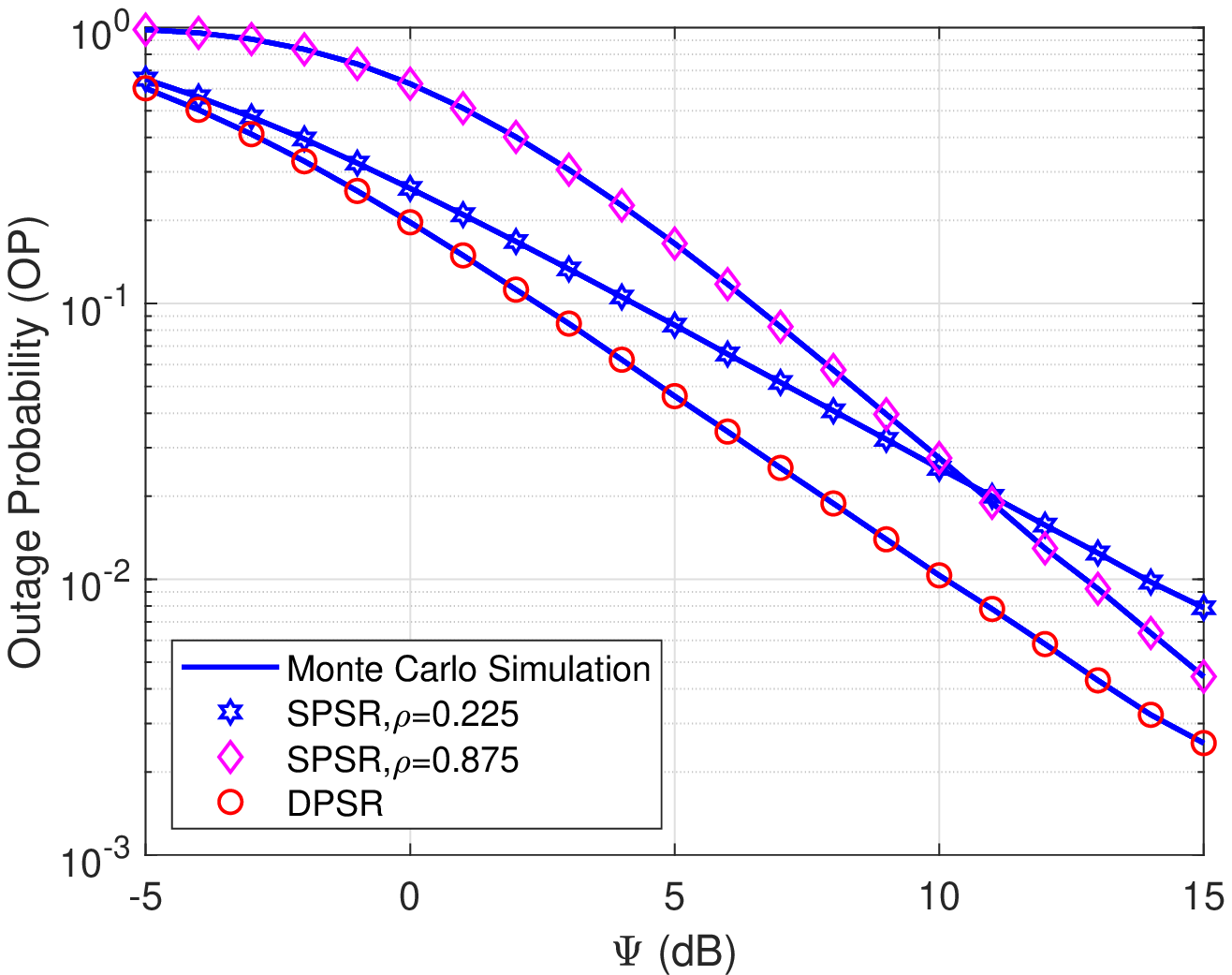}
		\caption{OP vs $\Psi$ with ${C_{th}}$=0.5 bps/Hz, $\eta$=0.8, $M$=2.}
		\label{fig:3}
	\end{minipage} \hfill
	\begin{minipage}{.48\textwidth}
		\centering
		\includegraphics[width=9.5cm,height=8cm]{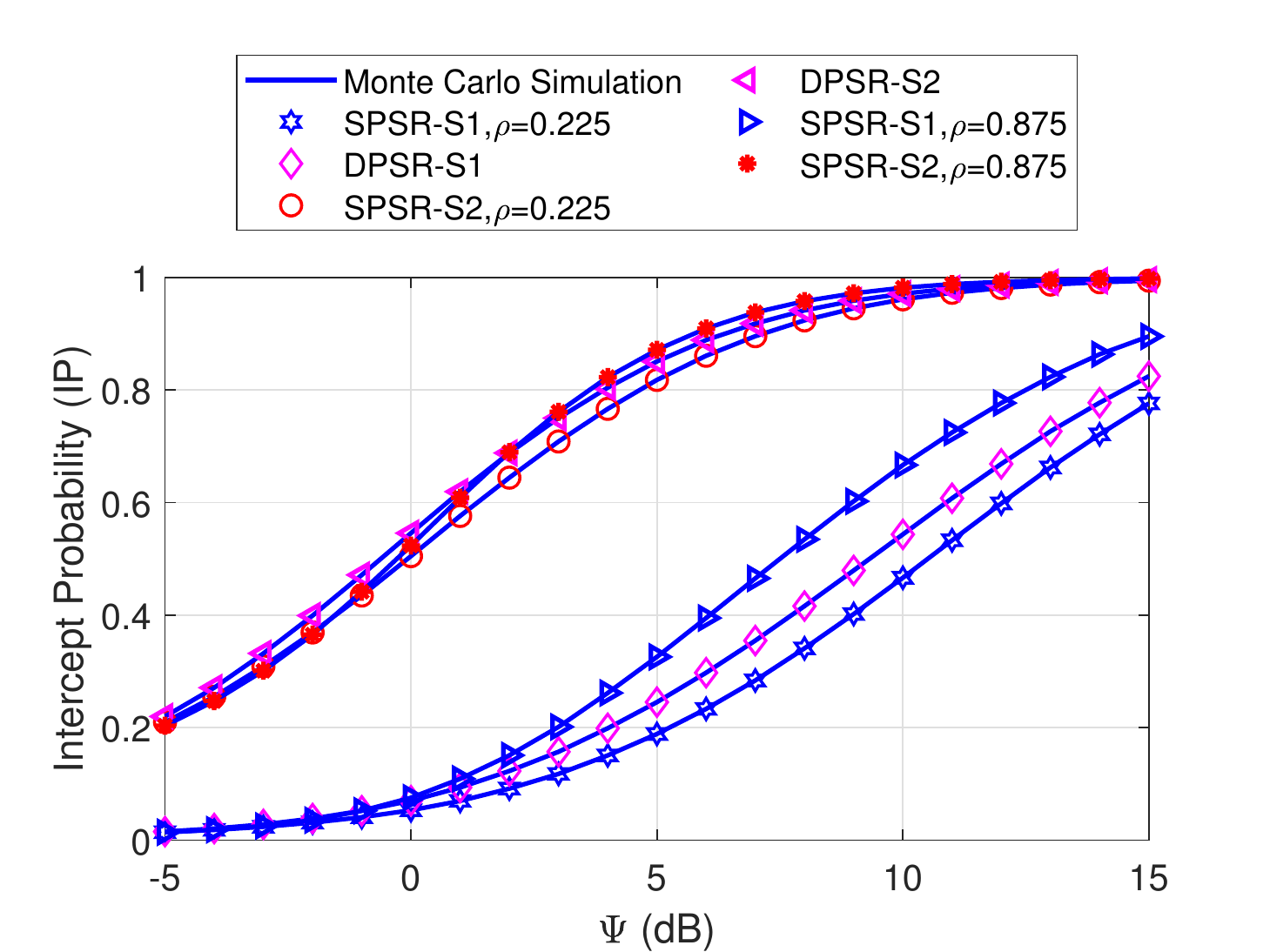}
		\caption{IP vs $\Psi$ with ${C_{th}}$=0.5 bps/Hz, $\eta$=0.8, $M$=2, $K$=1 and $\Phi$ = 1~dB.}
		\label{fig:4}
	\end{minipage}
\vspace{-0.5cm}
\end{figure*}
\section{Simulation Results}
\label{Sec:Num}
In this section, Monte-Carlo simulations are provided to validate the theoretical expressions and the impacts of various
parameters on the system performance. To claim the OP and IP for the proposed schemes, we perform $5\times10^6$ independent trials, and the channel coefficients are randomly generated as Rayleigh fading in each trial. The settings of simulation parameters are detailed in Table~\ref{Table_1}. In particular, we provide two scenarios in other to investigate the system performance corresponding to different node deployments. In the first scenario (S1), sources are located around $(0,0)$, the relay is located at $(0.5,0)$, the destination is located at $(1,0)$, and the eavesdropper is located closer to the relay at $(0.5, 1.5)$. In the second scenario (S2), sources, relay, and destination are kept the same locations, while the eavesdropper is located closer to the sources at $(0,1)$. The average channel gains, $\mathbb{E}\{\gamma_{\rm{XY}}\}$ are computed by utilizing \eqref{eq:gammaXY}, where the propagation distances are obtained from the above setting. Furthermore, the rate parameters $\lambda_{\rm{S_mR}}$, $\lambda_{\rm{RD}}$, $\lambda_{\rm{JE}}$, $\lambda_{\rm{RE}}$, and $\lambda_{\rm{SE}}$ are given in Table~\ref{Table_1}. The system performance metrics comprising the OP and IP are evaluated as a functions of different parameters.

%\begin{figure} [t]
%	\begin{center}
%		\includegraphics[width=9.5cm,height=8cm]{Figs/Fig5.eps} 
%	\end{center}
%	\caption{OP versus IP  with ${C_{th}}$=0.5 bps/Hz, $\eta$=0.8, $M$=2, $K$=1, $\Phi$=1 dB and $\Psi  \in \left[ { - 5,15\,dB} \right]$.}
%	\label{fig:5}
%\end{figure}
\begin{figure*}[t]
	\label{fig:5}
	\centering
	\begin{minipage}{.48\textwidth}
		\centering
		\includegraphics[width=9.5cm,height=8cm]{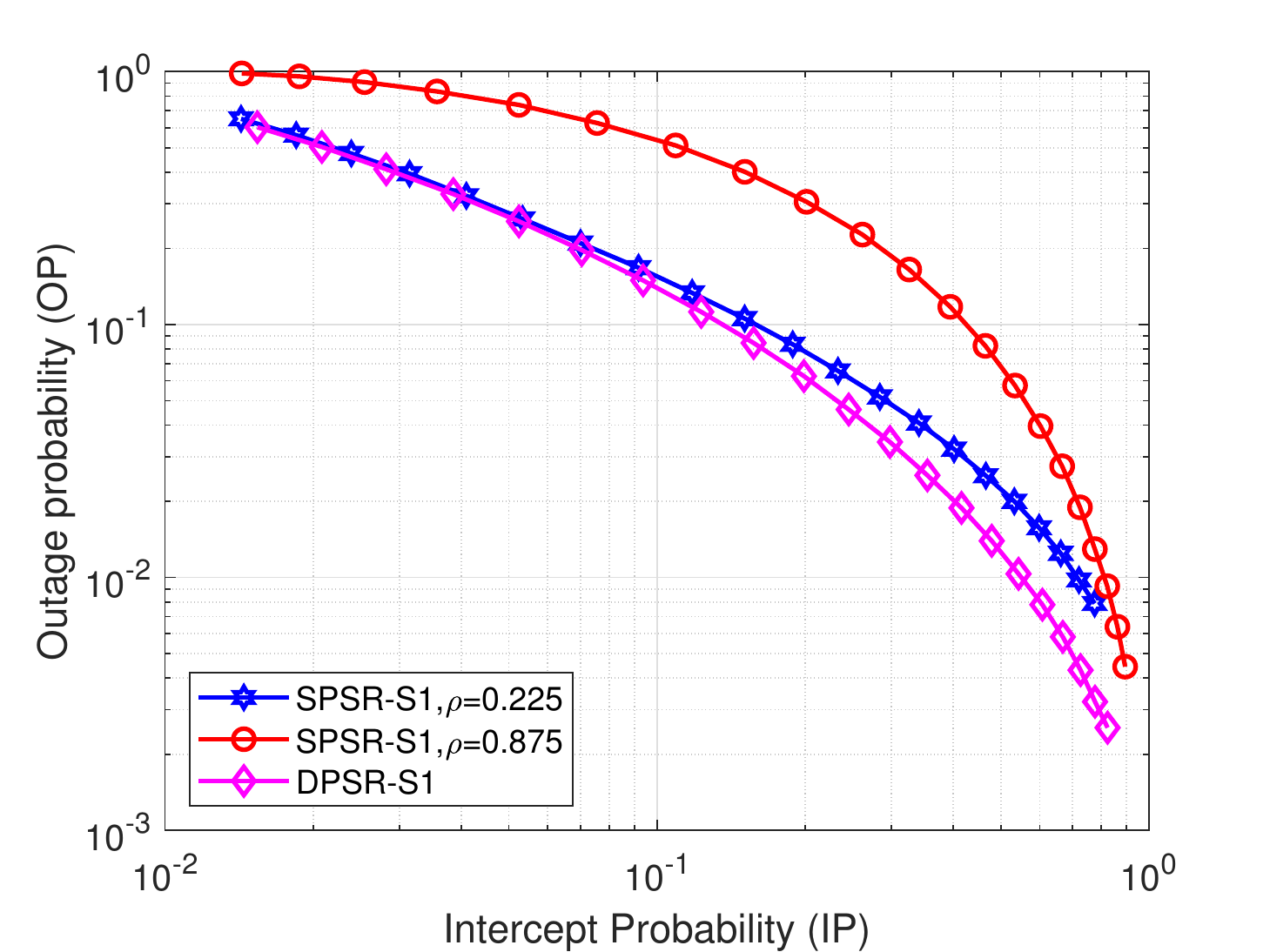}
		\caption{OP vs IP in scenario 1 with ${C_{th}}$=0.5 bps/Hz, $\eta$=0.8, $M$=2, $K$=1, $\Phi$=1 dB and $\Psi  \in \left[ { - 5,15\,dB} \right]$.}
		\label{fig:5a}
	\end{minipage} \hfill
	\begin{minipage}{.48\textwidth}
		\centering
		\includegraphics[width=9.5cm,height=8cm]{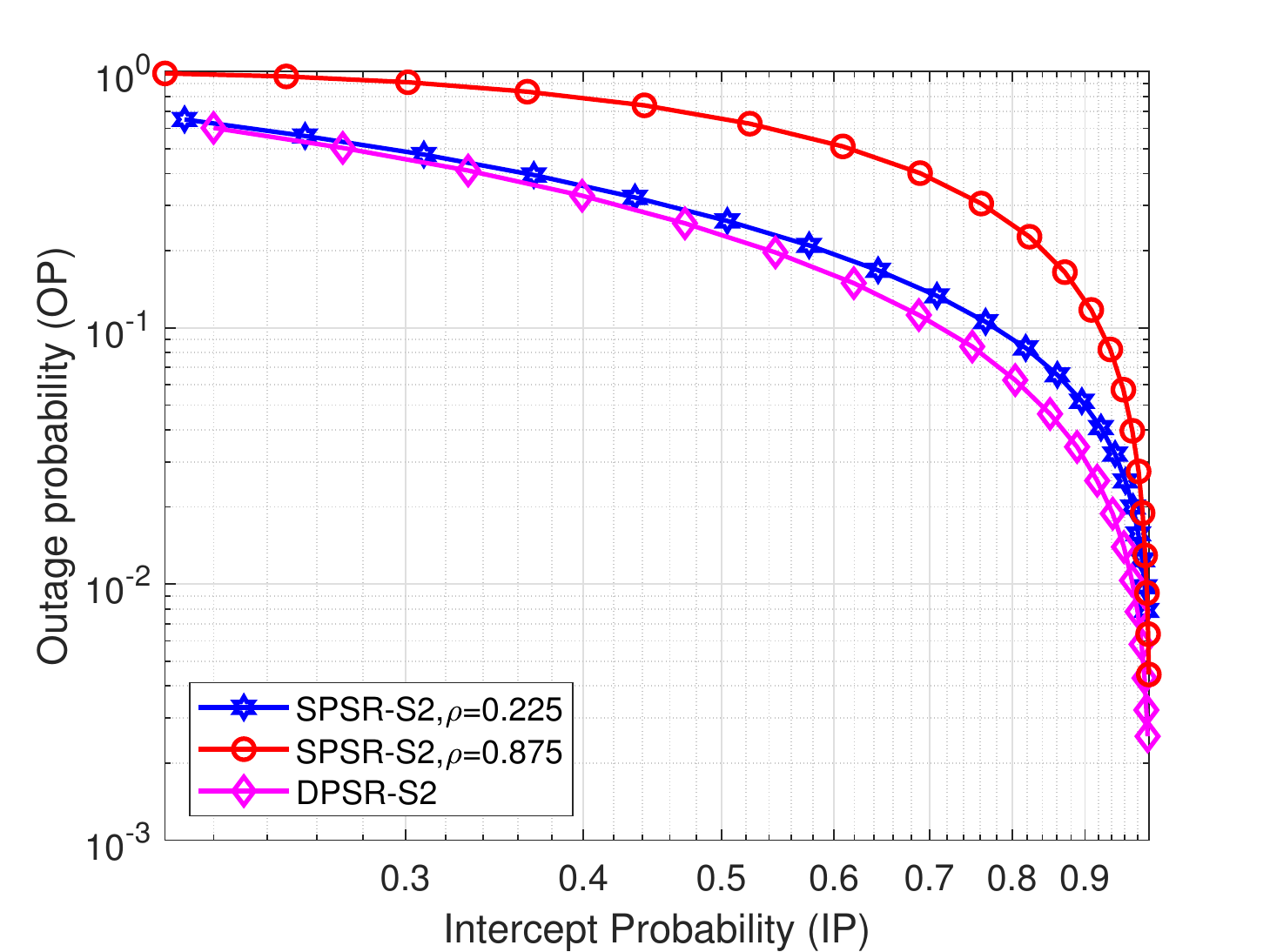}
		\caption{OP vs IP in scenario 2 with ${C_{th}}$=0.5 bps/Hz, $\eta$=0.8, $M$=2, $K$=1, $\Phi$=1 dB and $\Psi  \in \left[ { - 5,15\,dB} \right]$.}
		\label{fig:5b}
	\end{minipage}
    \vspace{-0.5cm}
\end{figure*}
\begin{figure} [t]
	\begin{center}
		\includegraphics[width=9.5cm,height=8cm]{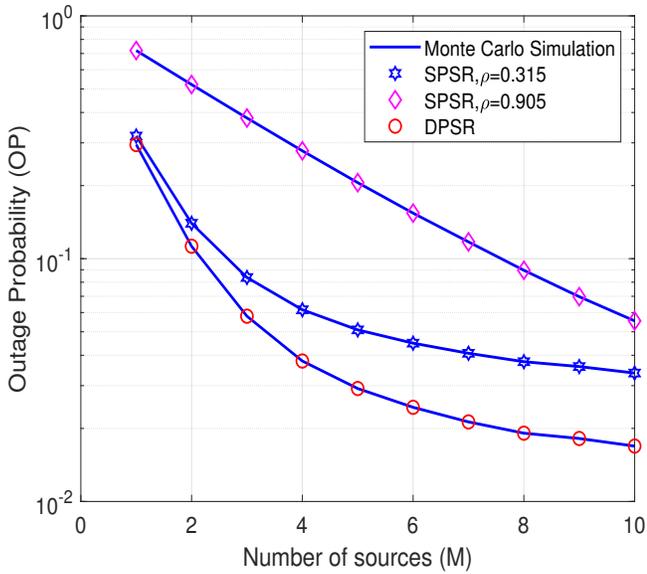} 
	\end{center}
	\caption{OP vs number of sources (M) with ${C_{th}}$=0.5 bps/Hz, $\eta$=0.8 and $\Psi$=2 dB.}
	\label{fig:6}
	\vspace{-0.5cm}
\end{figure}
\begin{figure} [t]
	\begin{center}
		\includegraphics[width=9.5cm,height=8cm]{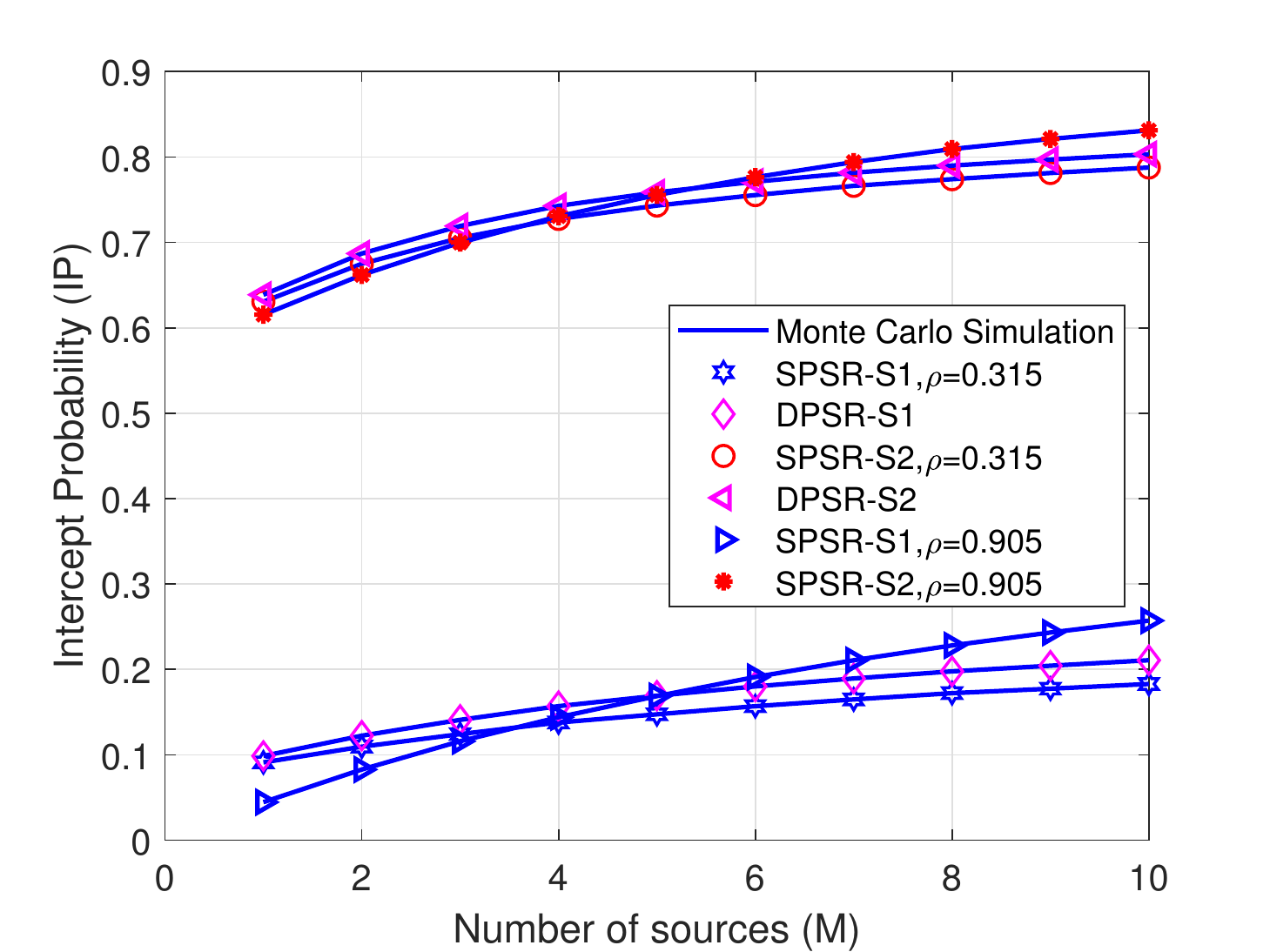} 
	\end{center}
	\caption{IP vs number of sources (M) with ${C_{th}}$=0.5 bps/Hz, $\eta$=0.8, $K$=1, $\Phi$=1 dB and $\Psi$=2 dB.}
	\label{fig:7}
	\vspace{-0.5cm}
\end{figure}
In Figs. \ref{fig:3} and \ref{fig:4}, we show the impact of $\Psi$ on OP and IP, where $\eta=$ 0.8, ${C_{th}}$= 0.5 bps/Hz, and $M = 2$. In Fig.~\ref{fig:3}, we compare DPSR with SPSR in OP analysis, whereas the SPSR is considered in two modes with $\rho$ equals 0.225 and 0.875, respectively. First, it is easy to observe that DPSR obtains better OP results than SPSR methods. Specifically, when $\Psi$ = 15 dB, the OP of DPSR approximately reach to $10^{-2.7}$, while the SPSBR with $\rho=0.225$ and $\rho=0.875$ impose 0.0025 and 0.0079, respectively. This is because the DPSR scheme aims to maximize the system capacity, thus it can improve the outage performance while the SPSR scheme always selects a fixed value of power splitting factor $\rho$. Second, the higher the $\Psi$ value is, the better the OP can be obtained. It can be explained by the fact that the higher $\Psi$ value means transmit power of source S is assigned more, which is defined in Eq. \eqref{eq:Theorem_1}. In Fig. \ref{fig:4}, the IP in both the DPSR and SPSR cases is studied, where $\eta=$ 0.8, ${C_{th}}$= 0.5 bps/Hz, M = 2, $K$=1 and $\Phi$=1 dB. It can be seen that the intercept performance increases with a higher value of $\Psi$. It is anticipated because the eavesdropper has more ability to overhear the message with a higher source transmit power $\rm{S}_m$. When the value of $\Psi$ is large enough, the IP in all schemes can converge to 1. An interesting thing in Fig. \ref{fig:4} is that the SPSR in second scenario obtains better IP performance as compared to the case in the first scenario. When $- 5 \le \Psi  \le 1$, the IP value of SPSR with $\rho$ = 0.875 is better than that of the SPSR with $\rho$ = 0.225. Otherwise, when $\Psi  >  - 1$ dB, the IP value of the SPSR with $\rho$=0.875 is higher than that of SPSR with $\rho$=0.225.  As $ - 5 \le \Psi  \le 0$ dB, the DPSR does not play a significant role compared with the other benchmarks since the power splitting factor $\rho$ is only optimized for the received signal strength of a specific legitimate user. For eavesdropper,  the IP  gets higher as the power splitting factor $\rho$ gets larger, which is a consequence of the high transmit power of relay R. This observation is applied for all the considered benchmarks.
%The DSSR scheme only works better other schemes when $ - 5 \le \Psi  \le 0$ dB.
In Figs. \ref{fig:3} and  \ref{fig:4}, simulations agree with the analytical values, which confirms our mathematical derivations' correctness.

In Figs.~\ref{fig:5a} and \ref{fig:5b}, we investigate security-reliability trade-off for S1 and S2 with the same parameter as in Figs. \ref{fig:3} and \ref{fig:4}. Nonetheless, these results demonstrate the benefits of our theoretical analysis. As observed in these figures, with increasing IP, the OP decrease and vice versa, indicating a trade-off between security and reliability. Figs. \ref{fig:5a} and \ref{fig:5b} also shows that the DPSR method can provide the best IP and OP value. However, this requires us to trade-off the choice that if we want the system obtains a better outage performance, it will be easier for the eavesdropper to steal information, on the contrary, if we want to restrict eavesdropping, we also reduce the outage performance.

Figs. \ref{fig:6} and \ref{fig:7} study the impact of the number of sources M on the OP and IP, respectively. Herein, simulation parameters are set as ${C_{th}}$=0.5 bps/Hz, $\eta$=0.8, $\Psi$=2 dB, $K$=1 and $\Phi$=1 dB. Fig. \ref{fig:6} shows a better OP as the number of sources increases, i.e., M varies from 1 to 10.  It is because, by increasing M, we will have more options to transmit information to the destination and choose the best source $\rm{S}_m$. Moreover, the OP of DPSR is still better than that of other SPSR cases. However, as shown in Fig. \ref{fig:7}, the IP will also increase by using a larger number of sources. Once again, this shows the trade-off between security and reliability.
\begin{figure*}[t]
	\centering
	\begin{minipage}{.48\textwidth}
		\centering
		\includegraphics[width=9.5cm,height=8cm]{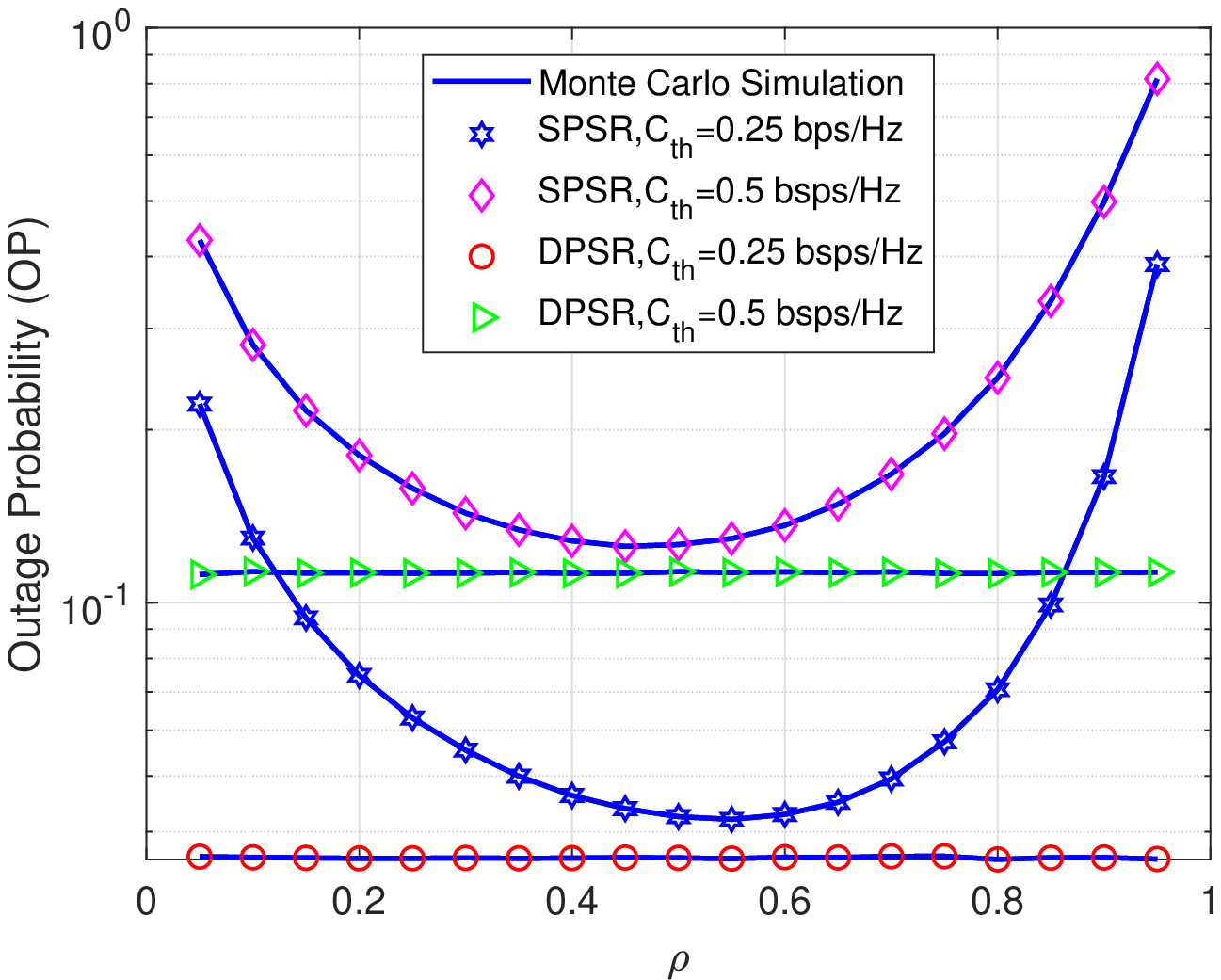}
		\caption{OP vs $\rho$ with $\eta$=0.8, $M$=2, $\Psi$=2 dB, $\Phi$=1 dB and K=1. }
		\label{fig:9}
	\end{minipage} \hfill
	\begin{minipage}{.48\textwidth}
		\centering		
		\includegraphics[width=9.5cm,height=8cm]{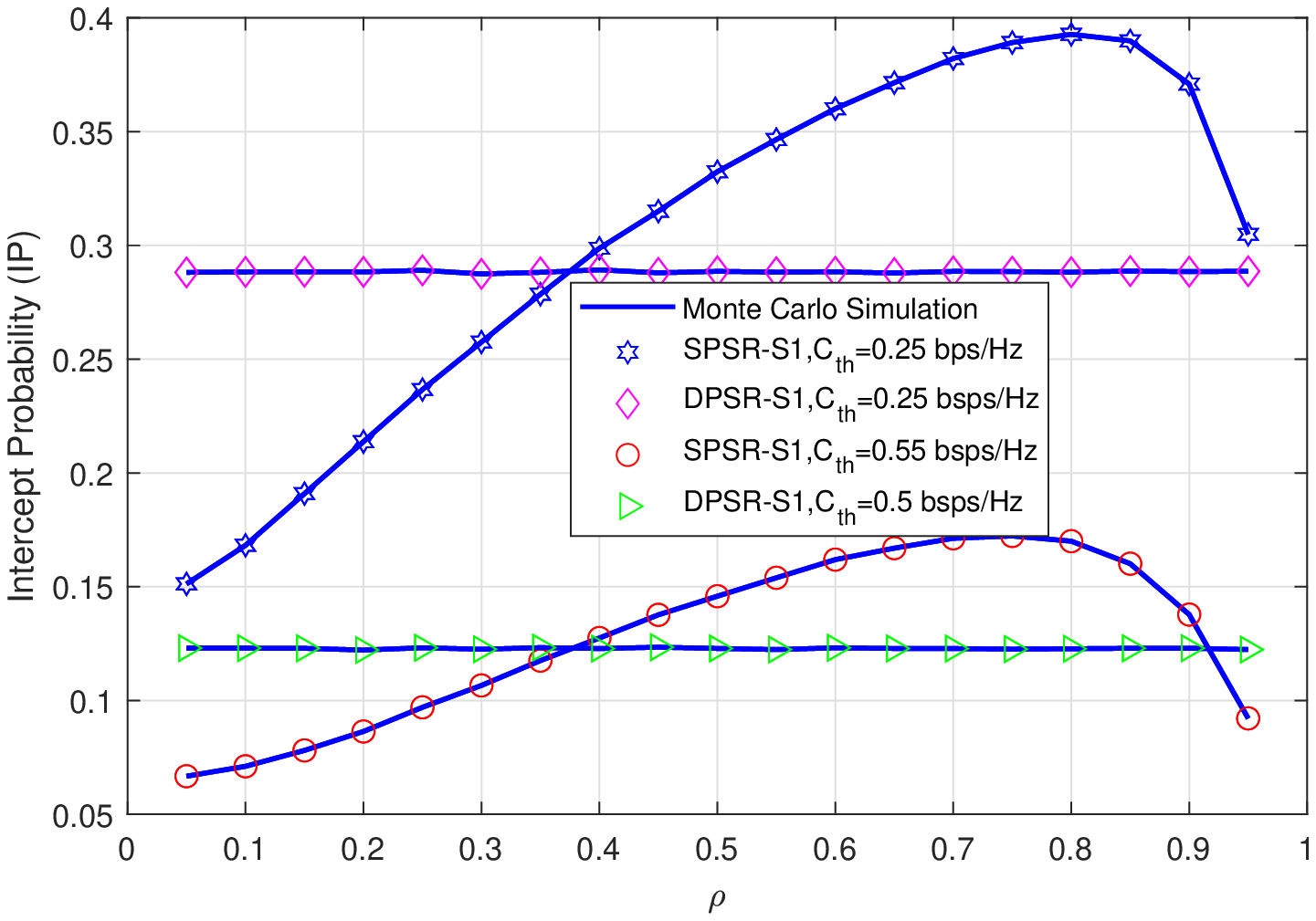}
		\caption{IP vs $\rho$ with $\eta$=0.8, $M$=2, $\Psi$=2 dB.}
		\label{fig:10}
	\end{minipage}
\vspace{-0.5cm}
\end{figure*}
\begin{figure*}[t]
	\centering
	\begin{minipage}{.48\textwidth}
		\begin{center}
			\includegraphics[width=9.5cm,height=8cm]{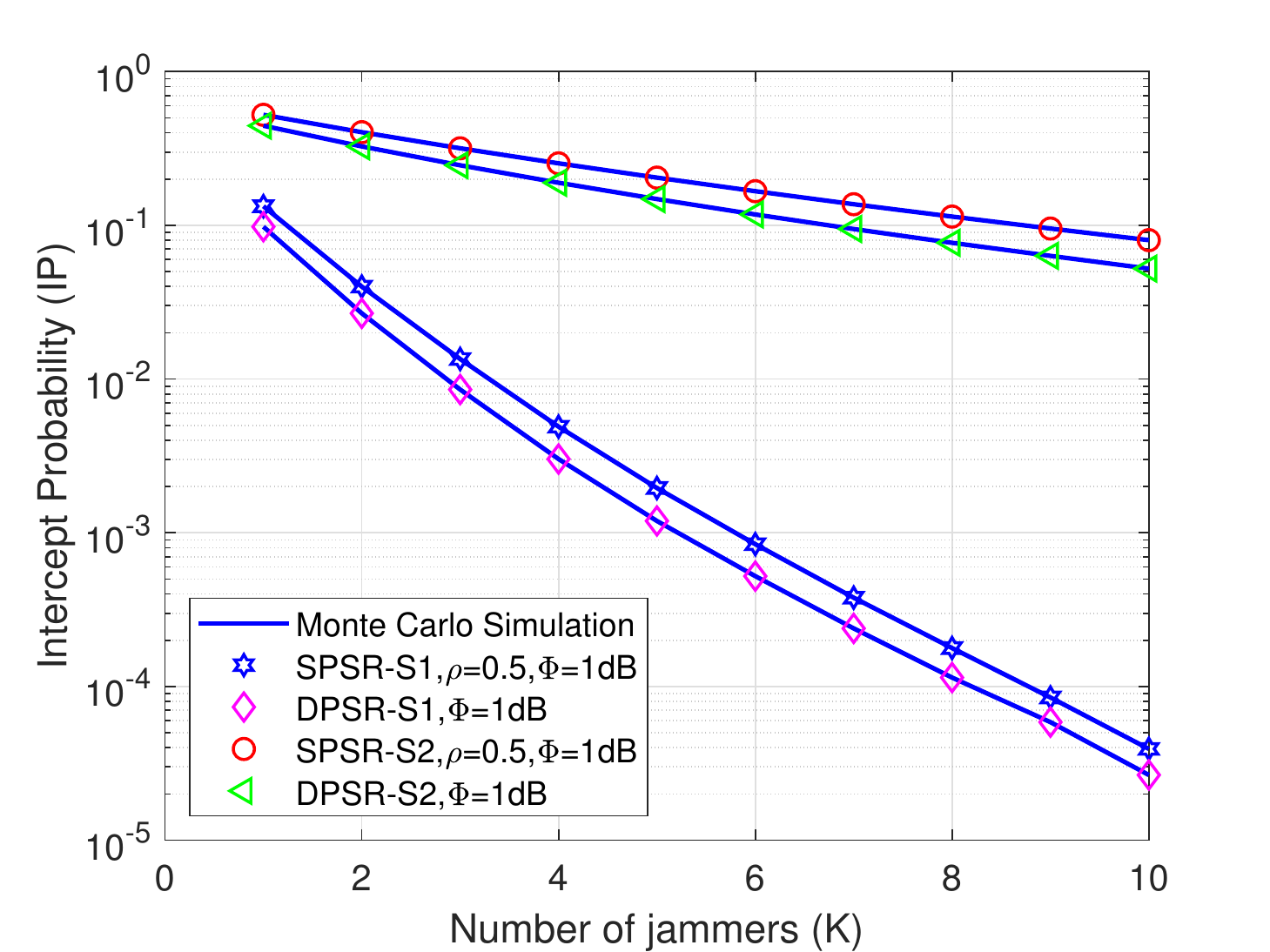} 
		\end{center}
		\caption{IP vs number of jammers (K) with $\eta$=0.8, $\rho$=0.5, $M$=3 and $\Psi$=2 dB. }
		\label{fig:11}
	\end{minipage} \hfill
	\begin{minipage}{.48\textwidth}
		\centering		
		\includegraphics[width=9.5cm,height=8cm]{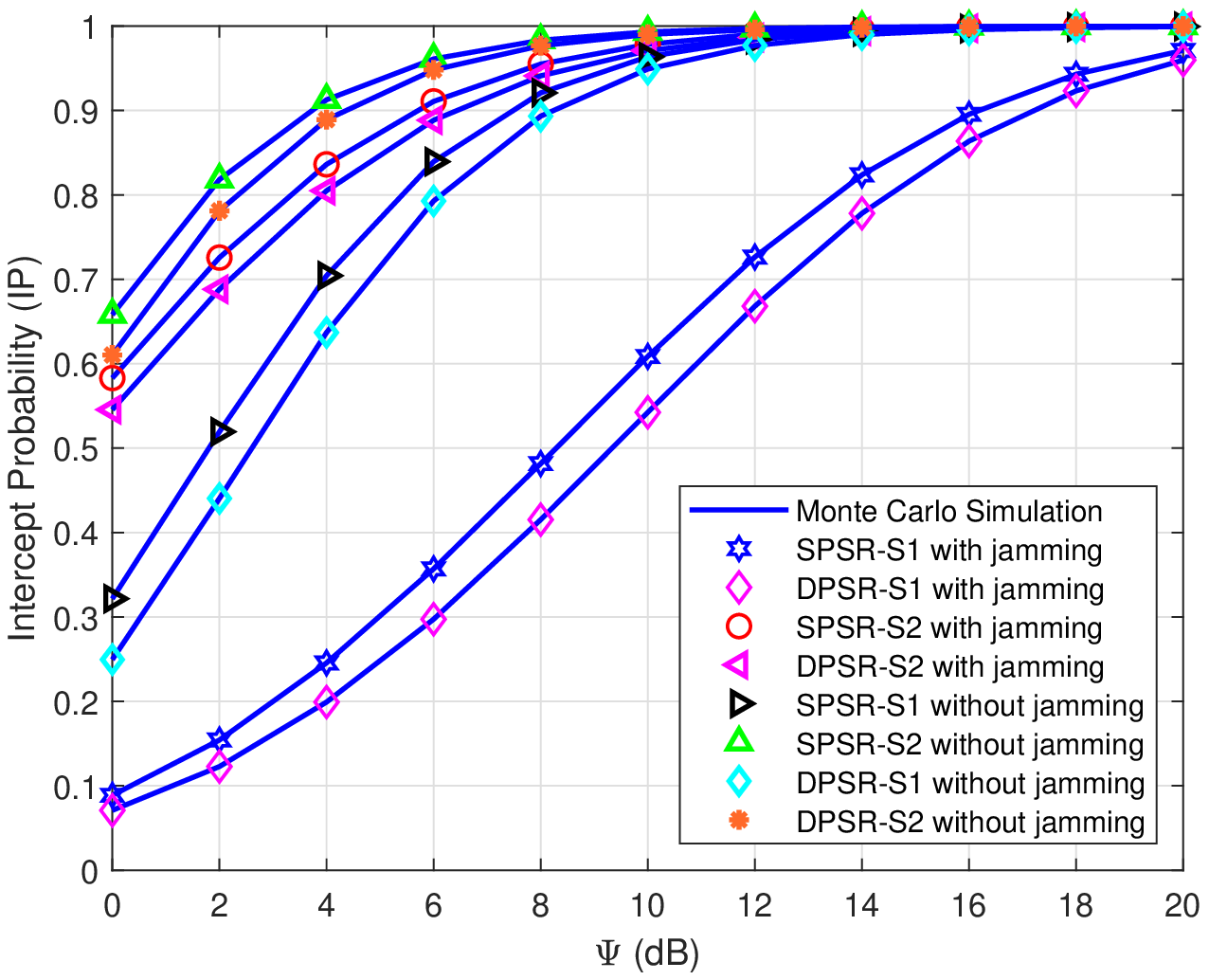}
		\caption{IP vs $\Psi$ with and without jamming cases.}
		\label{fig:12}
	\end{minipage}
\vspace{-0.5cm}
\end{figure*}

In Figs. \ref{fig:9} and \ref{fig:10}, we study OP and IP depending on different power splitting factor $\rho$, where $\eta$=0.8, $M$=2, $\Psi$=2 dB, $\Phi$=1 dB and K=1. The $\rho$ value plays an important role since it influences not only the amount of harvested energy at the relay but also the data transmission from ${\rm{R}} \to {\rm{D}}$. Therefore, there exists an optimal value of power splitting factor $\rho$ to maximize OP. Particularly, the OP of DPSR obtains the best performance compared to the two SPSR schemes since this method always optimizes the $\rho$ value during the system design. It also explains for the fact that why the DPSR curses are straight lines because DPSR does not depend on $\rho$ value. Besides, the DPSR scheme always obtains the best OP performance, and that does not mean it will have the worst IP performance, which is verified in Fig. \ref{fig:10}. This result is consistent with the result in Fig. \ref{fig:4}. For instance, when $0.35 < \rho  < 0.9$ the DPSR scheme has better a IP value than SPSR scheme. In order to provide an explicit explanation, we first observe that the IP is directly proportional to the power splitting factor as $0.35 < \rho  < 0.9$. In contrast, the DPSR yields the performance independent of the variation of $\rho$ since this benchmark exploits the optimal value $\rho^\ast$ as shown in Section~\ref{Sec:DPSRcase}. Furthermore, it holds that $\rho^\ast < \rho$ on average, and therefore the strength of received signal by the DPSR is better than the SPSR. Consequently, the DPSR provides lower the IP than the SPSR. Here again, the reliability-security trade-off phenomenon also happens in both figures.

To reduce the IP as well as the eavesdropping ability of E, we will consider the effect of the number of jammers as shown in Fig. \ref{fig:11}. In this figure, we can see that this will be consistent with reality because increasing K will cause more jamming signals to be sent to E and reduce its SNR ratio and its ability to overhear from the relay. Furthermore, in this figure, we compared DPSR with SPSR in IP analysis, while SPSR is considered in two cases with $\Phi$ equals -1 dB and 1 dB, respectively. It is easy to see that if the jammers' transmit power $\Phi$ is reduced, the jamming signal will be weaker and it is difficult to suppress the effect of the eavesdropping devices.

In Fig. 12, we investigate the IP as a function of $\Psi$ (dB) for the case with and without jamming, where $\eta=0.8$, $C_{\rm th}$ = 0.5 bps/Hz, M = 2, K = 1, $\rho$ = 0.55, and $\Phi$ = 1 dB. Firstly, it can be observed from Fig. 12 that the intercept performance of the SPSR schemes is much better than that as compared to the DPSR ones, which is has been explained in other figures. Furthermore, the intercept performance of the SPSR and DPSR with jamming is lower than that compared to without jamming scenarios. This is expected since the eavesdropper gains higher SNR in the case of no jamming signals. Thus, it has a higher probability to successfully decode the received signals from source and relay. Specifically, when $\Psi$ equals 2 and 4 dB, the IP of SPSR-S1 (or DPSR-S1) with jamming is 0.1542 (or 0.1228) and 0.2456 (or 0.1992), respectively. Meanwhile, the IP of the SPSR-S1 (or DPSR-S1) without jamming imposes 0.5194 (or 0.4404) and 0.7043 (or 0.6370), respectively. Moreover, the intercept performance of the S2 is better than that as compared to the S1. This is because the eavesdropper is located closer to sources in S2, thus it has a further distance to the jammer than S1 case. Consequently, the jammer has less effect on the eavesdropper in S2 than S1 schemes.

\section{{Conclusion and Future Directions}}
\label{Sec:Con}
In this paper, we have studied physical layer security for an AF-based SWIPT relay network consisting of multiple sources, multiple friendly jammers, an EH relay, and a destination in the presence of an eavesdropper. We have investigated the reliability-security trade-off performance of DPSR and SPSR schemes in terms of the IP and OP. Furthermore, the impact of system parameters on network performance, and the correctness of the analytical expressions have been verified and investigated by using Monte Carlo Simulations. Based on plots of the OP vs IP, we can recommend suitable system parameters to meet the pre-defined requirements on IP and OP. Particularly, a sufficient number of friendly jammers can significantly enhance system security by reducing the signal-to-noise ratio at the eavesdropper. The results of this paper can provide guidance for securing IoT networks in which the confidentiality in information transmission is important.  An interesting topic for further exploration in this area is the extension to cases in which source and eavesdropping nodes have multiple antennas.

The outcome of this work will motivate a more general model that considers a direct link between source and destination,  which imposes new challenges and complexities but might enhance network performance.
%%%%%%%%%%%%%%%%%%%%%%%%%%%%%%%%%%%%%%%%%%%%%%%%%%%%%%%%%%%%%%%%%%%%%%
%\section{Acknowledgement}
%\label{ACK}
%This study was financially supported by Van Lang University, Vietnam and was supported in part by the Ministry of Education, Youth and Sports of the Czech Republic under the grant SP2021/25 and e-INFRA CZ (ID:90140).

\appendix
\subsection{Proof of Theorem~\ref{Theorem:Outage1}} \label{Appendix:Outage1}
Based on the definition of the OP in \eqref{EQ21} with the SNR in \eqref{EQ11}, $\rm{OP}_{\rm{SPSR}}$ can be expressed as follows
\begin{align}
	\label{EQ22}
	&{\rm{O}}{{\rm{P}}_{{\rm{SPSR}}}} = \Pr \left( {\frac{{\eta \rho (1 - \rho )\Psi {\gamma _{{{\rm{S}}_m}{\rm{R}}}}{\gamma _{{\rm{RD}}}}}}{{\eta \rho {\gamma _{{\rm{RD}}}} + (1 - \rho )}} < {\gamma _{th}}} \right) \notag\\ 
	&= \Pr \left( {{\gamma _{{{\rm{S}}_m}{\rm{R}}}} < \frac{{{\gamma _{th}}\left( {\eta \rho {\gamma _{{\rm{RD}}}} + (1 - \rho )} \right)}}{{\eta \rho (1 - \rho )\Psi {\gamma _{{\rm{RD}}}}}}} \right)\notag\\
	&= \int\limits_0^\infty  {{F_{{\gamma _{{{\rm{S}}_m}{\rm{R}}}}}}\left\{ {\frac{{{\gamma _{th}}\left( {\eta \rho x + (1 - \rho )} \right)}}{{\eta \rho (1 - \rho )\Psi x}}} \right\}{f_{{\gamma _{{\rm{RD}}}}}}(x)} dx,
\end{align}
where the CDF of ${\gamma _{{{\rm{S}}_m}{\rm{R}}}}$ is given in \eqref{EQ18} and the PDF of  ${f_{{\gamma _{{\rm{RD}}}}}}(x)$ is given in \eqref{EQ2}. By utilizing \eqref{EQ18}, we obtain the following expression
\begin{multline} \label{eq:FsmR}
{F_{{\gamma _{{{\rm{S}}_m}{\rm{R}}}}}}\left\{ {\frac{{{\gamma _{th}}\left( {\eta \rho x + \left( {1 - \rho } \right)} \right)}}{{\eta \rho \left( {1 - \rho } \right)\Psi x}}} \right\} = \\
 1 + \sum\limits_{b = 1}^M {{{\left( { - 1} \right)}^b}C_M^b\exp \left( { - \frac{{b{\lambda _{{\rm{SR}}}}{\gamma _{th}}}}{{\eta \rho \Psi x}} - \frac{{b{\lambda _{{\rm{SR}}}}{\gamma _{th}}}}{{\left( {1 - \rho } \right)\Psi }}} \right)}.
\end{multline}
After that, by plugging \eqref{EQ2} and \eqref{eq:FsmR} into \eqref{EQ22}, $\rm{OP}_{\rm{SPSR}}$ can be represented as 
\begin{align}
	\label{EQ23}
	{\rm{O}}{{\rm{P}}_{{\rm{SPSR}}}} &= 1 + \sum\limits_{b = 1}^M {{{( - 1)}^b}C_M^b} \exp \left( { - \frac{{b{\lambda _{{\rm{SR}}}}{\lambda _{th}}}}{{(1 - \rho )\Psi }}} \right) \notag\\ & \times {\lambda _{{\rm{RD}}}}\int\limits_0^\infty  {\exp } \left( { - \frac{{b{\lambda _{{\rm{SR}}}}{\gamma _{th}}}}{{\eta \rho \Psi x}} - {\lambda _{{\rm{RD}}}}x} \right)dx.
\end{align}
With the help of \cite[3.324.1]{gradshteyn2014} and some algebraic manipulations, we obtain \eqref{eq:Theorem_1}, which completes the proof.
\subsection{Proof of Theorem~\ref{Theorem:IP}} \label{Appendix:IP}
Based on \eqref{eq:gammaE1}, \eqref{EQ15}, \eqref{EQ16}, and \eqref{EQ25},  the ${\rm{I}}{{\rm{P}}_{{\rm{SPSR}}}}$ can be rewritten by:
\begin{equation} \label{EQ26}
\begin{split}
	&{\text{I}}{{\text{P}}_{{\text{SPSR}}}} = 1 - \Pr \left( { \alpha < {\gamma _{th}}} \right)  \\
&	= 1 - \int\limits_0^\infty  {\Pr \left( {\max \left( {{\vartheta _1},{\vartheta _2}} \right) < {\gamma _{th}}} \right) {f_\Xi }(x)dx}  \hfill \\
&	= 1 - \int\limits_0^\infty  {Q(x) {f_\Xi }(x)dx} , \hfill \\ 
\end{split} 
	\end{equation}
	where $\alpha = \max \left( {\frac{{\Psi {\gamma _{{{\text{S}}_m}{\text{E}}}}}}{{\Phi \Xi }},\frac{{\eta \rho \left( {1 - \rho } \right)\Psi {\gamma _{{{\text{S}}_m}{\text{R}}}}{\gamma _{{\text{RE}}}}}}{{\eta \rho {\gamma _{{\text{RE}}}} + \left( {1 - \rho } \right)\Phi \Xi  + \left( {1 - \rho } \right)}}} \right)$ and the following definitions hold
	\begin{align}
		\vartheta_1 &\triangleq {\frac{{\Psi {\gamma _{{{\rm{S}}_m}{\rm{E}}}}}}{{\Phi x }},}\\
		\vartheta_2 &\triangleq {\frac{{\eta \rho (1 - \rho ){\gamma _{{{\rm{S}}_m}{\rm{R}}}}{\gamma _{{\rm{RE}}}}\Psi }}{{\eta \rho {\gamma _{{\rm{RE}}}} + \left( {1 - \rho } \right)\Phi x  + \left( {1 - \rho } \right)}} }, \\
		Q(x) &\triangleq \Pr \left( {\max \left(\vartheta_1, \vartheta_2 \right)  < {\gamma _{th}}} \right) \notag\\
		&= \underbrace {\Pr \left( 	\vartheta_1 < {\gamma _{th}} \right)}_{{Q_1}(x)} \times \underbrace {\Pr \left( 	\vartheta_2 < {\gamma _{th}} \right)}_{{Q_2}(x)}  \label{EQQv1}
\end{align}
From \eqref{EQQv1}, $Q_1(x)$ and $Q_2(x)$  can be calculated by:
	\begin{align}
		\label{EQ27}
		&{Q_1}(x) = \Pr \left( {\frac{{\Psi {\gamma _{{{\text{S}}_m}{\text{E}}}}}}{{\Phi x}} < {\gamma _{th}}} \right) = \Pr \left( {{\gamma _{{{\text{S}}_m}{\text{E}}}} < \frac{{{\gamma _{th}}\Phi x}}{\Psi }} \right) \hfill \notag\\
		&= 1 - \exp \left( { - \frac{{{\gamma _{th}}{\lambda _{{\text{SE}}}}\Phi x}}{\Psi }} \right), \\ 
		\label{EQQ2}
		& {Q_2}(x) = \Pr \left( {\frac{{\eta \rho {\rho _1}\Psi {\gamma _{{{\text{S}}_m}{\text{R}}}}{\gamma _{{\text{RE}}}}}}{{\eta \rho {\gamma _{{\text{RE}}}} + {\rho _1}\Phi x + {\rho _1}}} < {\gamma _{th}}} \right)  \notag \\
			& = \Pr \left\{ {{\gamma _{{{\text{S}}_m}{\text{R}}}} < \frac{{{\gamma _{th}}\left( {\eta \rho {\gamma _{{\text{RE}}}} + {\rho _1}\Phi x + {\rho _1}} \right)}}{{\eta \rho {\rho _1}{\gamma _{{\text{RE}}}}\Psi }}} \right\}  \notag \\
			& = \int\limits_0^\infty  {{F_{{\gamma _{{{\text{S}}_m}{\text{R}}}}}}} \left\{ {\frac{{{\gamma _{th}}\left( {\eta \rho y + {\rho _1}\Phi x + {\rho _1}} \right)}}{{\eta \rho {\rho _1}y\Psi }}} \right\}  {f_{{\gamma _{{\text{RE}}}}}}(y)dy, 
	\end{align}
	where $\rho_1 \triangleq (1-\rho)$. By using the same approach as what has done in \eqref{eq:FsmR} to the last equation of \eqref{EQQ2}, we obtain the closed-form expression of $Q_2(x)$
	\begin{align}
		\label{EQ28}
		&{Q_2(x)}  = 1 + 2\sum\limits_{b = 1}^M {{{( - 1)}^b}} C_M^b\exp \left( { - \frac{{b{\lambda _{{\rm{SR}}}}{\gamma _{th}}}}{{\rho_1 \Psi }}} \right)\times\notag\\
		& \sqrt {\frac{{b{\lambda _{{\rm{SR}}}}{\lambda _{{\rm{RE}}}}{\gamma _{th}}\left( {\Phi x + 1} \right)}}{{\eta \rho \Psi }}}  {K_1}\left( {2\sqrt {\frac{{b{\lambda _{{\rm{SR}}}}{\lambda _{{\rm{RE}}}}{\gamma _{th}}\left( {\Phi x + 1} \right)}}{{\eta \rho \Psi }}} } \right).
\end{align} 
Following the same methodology as done to obtain the result in \eqref{eq:Theorem_1}, we obtain the closed-form expression of $Q_2(x)$ as
\begin{equation} \label{EQ28v1}
	\begin{split}
		& {Q_2}(x) = 1 + 2\sum\limits_{b = 1}^M {{{\left( { - 1} \right)}^b}C_M^b\exp \left( { - \frac{{b{\lambda _{{\text{SR}}}}{\gamma _{th}}}}{{{\rho _1}\Psi }}} \right)}   \\
		&  \times \sqrt {\frac{{b{\lambda _{{\text{SR}}}}{\gamma _{{\text{RE}}}}{\gamma _{th}}\left( {\Phi x + 1} \right)}}{{\eta \rho \Psi }}}  \times {K_1}\left( {2\sqrt {\frac{{b{\lambda _{{\text{SR}}}}{\gamma _{{\text{RE}}}}{\gamma _{th}}\left( {\Phi x + 1} \right)}}{{\eta \rho \Psi }}} } \right).
	\end{split}
\end{equation}
Based on \eqref{EQ20}, \eqref{EQ27}, and \eqref{EQ28v1}, the ${\rm{I}}{{\rm{P}}_{{\rm{SPSR}}}}$ in \eqref{EQ26} can be reformulated as in \eqref{EQ30},
\begin{figure*}
\begin{align}
		\label{EQ30}
		& {\rm{I}}{{\rm{P}}_{{\rm{SPSR}}}} = 1 - \int\limits_0^\infty  {\left\{ {1 - \exp \left( { - \frac{{{\gamma _{th}}{\lambda _{{\rm{SE}}}}\Phi x}}{\Psi }} \right)} \right\}}  \left\{ \begin{array}{l}
			1 + 2\sum\limits_{b = 1}^M {{{\left( { - 1} \right)}^b}} C_M^b\exp \left( { - \frac{{b{\lambda _{{\rm{SR}}}}{\gamma _{th}}}}{{{\rho _1}\Psi }}} \right)\\
			\times \sqrt {\frac{{b{\lambda _{{\rm{SR}}}}{\lambda _{{\rm{RE}}}}{\gamma _{th}}\left( {\Phi x + 1} \right)}}{{\eta \rho \Psi }}} \\
			\times {K_1}\left( {2\sqrt {\frac{{b{\lambda _{{\rm{SR}}}}{\lambda _{{\rm{RE}}}}{\gamma _{th}}\left( {\Phi x + 1} \right)}}{{\eta \rho \Psi }}} } \right)
		\end{array} \right\}  \frac{{{{\left( {{\lambda _{{\rm{JE}}}}} \right)}^K}}}{{\left( {K - 1} \right)!}}{x^{K - 1}}\exp \left( { - {\lambda _{{\rm{JE}}}}x} \right)dx  \notag\\
		&= \underbrace {\frac{{{{\left( {{\lambda _{{\rm{JE}}}}} \right)}^K}}}{{\left( {K - 1} \right)!}}\int\limits_0^\infty  {{x^{K - 1}}\exp \left( { - x{{\tilde \lambda }_{{\rm{JE}}}}} \right)dx} }_{{\Upsilon _1}}+ 2\sum\limits_{b = 1}^M {{{\left( { - 1} \right)}^b}} C_M^b \exp \left( { - \frac{{b{\lambda _{{\rm{SR}}}}{\gamma _{th}}}}{{{\rho _1}\Psi }}} \right) \notag\\
		&\times \frac{{{{\left( {{\lambda _{{\rm{JE}}}}} \right)}^K}}}{{\left( {K - 1} \right)!}}  \sqrt {\frac{{b{\lambda _{{\rm{SR}}}}{\lambda _{{\rm{RE}}}}{\gamma _{th}}}}{{\eta \rho \Psi }}} \left\{ \begin{array}{l}
			\underbrace {\int\limits_0^\infty  \begin{array}{l}
					{x^{K - 1}}\exp \left( { - x{{\tilde \lambda }_{{\rm{JE}}}}} \right) \times \sqrt {\left( {\Phi x + 1} \right)} \\
					\times {K_1}\left( {2\sqrt {\frac{{b{\lambda _{{\rm{SR}}}}{\lambda _{{\rm{RE}}}}{\gamma _{th}}\left( {\Phi x + 1} \right)}}{{\eta \rho \Psi }}} } \right)dx
			\end{array} }_{{\Upsilon _2}}\\
			- \underbrace {\int\limits_0^\infty  \begin{array}{l}
					{x^{K - 1}}\exp \left( { - {\lambda _{{\rm{JE}}}}x} \right)\sqrt {\left( {\Phi x + 1} \right)} \\
					\times {K_1}\left( {2\sqrt {\frac{{b{\lambda _{{\rm{SR}}}}{\lambda _{{\rm{RE}}}}{\gamma _{th}}\left( {\Phi x + 1} \right)}}{{\eta \rho \Psi }}} } \right)dx
			\end{array} }_{{\Upsilon _3}}
		\end{array} \right\},
	\end{align}
\hrule
\end{figure*}
where ${\tilde \lambda _{{\rm{JE}}}} = \frac{{{\gamma _{th}}{\lambda _{{\rm{SE}}}}\Phi }}{\Psi } + {\lambda _{{\rm{JE}}}}$. Thanks to \cite[3.381.4]{gradshteyn2014}, ${\Upsilon _1}$ in \eqref{EQ30} can be calculated as:
	\begin{align}
		\label{EQUpsilon1}
		{\Upsilon _1} = {\left( {\frac{{{\lambda _{{\rm{JE}}}}}}{{{{\tilde \lambda }_{{\rm{JE}}}}}}} \right)^K}.
\end{align}
By applying Taylor series for $\exp \left( { - {\lambda _{{\rm{JE}}}}x} \right) = \sum\limits_{t = 0}^\infty  {\frac{{{{\left( { - {\tilde  \lambda _{{\rm{JE}}}}x} \right)}^t}}}{{t!}}}  = \sum\limits_{t = 0}^\infty  {\frac{{{{( - 1)}^t}{{\left( {{\tilde  \lambda _{{\rm{JE}}}}} \right)}^t}}}{{t!}}} {x^t}$ and by changing the variable $y = \Phi x + 1$, $\Upsilon _2$ can be rewritten as:
	\begin{align}
		\label{EQ31}
		{\Upsilon _2} &= \sum\limits_{t = 0}^\infty  {\frac{{{{( - 1)}^t}{{\left( {{{\tilde \lambda }_{{\rm{JE}}}}} \right)}^t}}}{{t!{\Phi ^{t + K}}}}} \int\limits_1^\infty  {y^{1/2}}{{\left( {y - 1} \right)}^{t + K - 1}} \notag\\ &\times {K_1} \left( {2\sqrt {\frac{{b{\lambda _{{\rm{SR}}}}{\lambda _{{\rm{RE}}}}{\gamma _{th}}y}}{{\eta \rho \Psi }}} } \right)dy.
\end{align}
Next, by using \cite[6.592.4]{gradshteyn2014}, above equation can be reformulated by:
	\vspace{0.2cm}
	\begin{align}
		\label{EQ32}
		{\Upsilon _2}=& \sum\limits_{t = 0}^\infty  {\frac{{{{( - 1)}^t}{{\left( {{{\tilde \lambda }_{{\rm{JE}}}}} \right)}^t}\Gamma \left( {t + K} \right)}}{{t!{\Phi ^{t + K}}}}}  \times {\left( {2\sqrt {\frac{{b{\lambda _{{\rm{SR}}}}{\lambda _{{\rm{RE}}}}{\gamma _{th}}}}{{\eta \rho \Psi }}} } \right)^{ - 1}} \notag\\ & \times G_{1,3}^{3,0}\left( {\frac{{b{\lambda _{{\rm{SR}}}}{\lambda _{{\rm{RE}}}}{\gamma _{th}}}}{{\eta \rho \Psi }}\left| \begin{array}{l}
				0\\
				- t - K,1,0
			\end{array} \right.} \right),
	\end{align}
	where $\Gamma \left(  \cdot  \right)$ is the Gamma function and  $G_{p,q}^{m,n}\left( {z\left| \begin{array}{l}
			{a_1},...,{a_p}\\
			{b_1},...,{b_q}
		\end{array} \right.} \right)$ is the Meijer G-function. By applying the same approach for \eqref{EQ32},  $\Upsilon _3$ can be formulated as:
	\begin{align}
		\label{EQUpsilon3}
		{\Upsilon _3} &= \sum\limits_{t = 0}^\infty  {\frac{{{{( - 1)}^t}{{\left( {{\lambda _{{\rm{JE}}}}} \right)}^t}\Gamma \left( {t + K} \right)}}{{t!{\Phi ^{t + K}}}}}  \times {\left( {2\sqrt {\frac{{b{\lambda _{{\rm{SR}}}}{\lambda _{{\rm{RE}}}}{\gamma _{th}}}}{{\eta \rho \Psi }}} } \right)^{ - 1}} \notag\\ &\times G_{1,3}^{3,0}\left( {\frac{{b{\lambda _{{\rm{SR}}}}{\lambda _{{\rm{RE}}}}{\gamma _{th}}}}{{\eta \rho \Psi }}\left| \begin{array}{l}
				0\\
				- t - K,1,0
			\end{array} \right.} \right).
\end{align}
Finally, By substituting \eqref{EQUpsilon1}, \eqref{EQ32}, and \eqref{EQUpsilon3}, the ${\rm{I}}{{\rm{P}}_{{\rm{SPSR}}}}$, we obtain \eqref{eq:Theorem_2}. This is the end of the proof.
\subsection{Proof of Theorem~\ref{Theorem:Outage2}} \label{Appendix:Outage2}
Substituting the optimal ${\rho ^*} = \frac{1}{{1 + \left| {{h_{{\rm{RD}}}}} \right|\sqrt \eta  }}$ into \eqref{EQ22}, the ${\rm{O}}{{\rm{P}}_{{\rm{DPSR}}}}$ is expressed as
\begin{align}
	\label{EQ34}
	&{\rm{O}}{{\rm{P}}_{{\rm{DPSR}}}} \notag\\ &= \Pr \left( {\frac{{\eta \Psi {\gamma _{{{\rm{S}}_m}{\rm{R}}}}{\gamma _{{\rm{RD}}}}}}{{{{\left( {1 + \sqrt {\eta {\gamma _{{\rm{RD}}}}} } \right)}^2}}} < {\gamma _{th}}} \right) \notag\\ &= \Pr \left( {{\gamma _{{{\rm{S}}_m}{\rm{R}}}} < \frac{{{\gamma _{th}}\left( {1 + 2\sqrt {\eta {\gamma _{{\rm{RD}}}}}  + \eta {\gamma _{{\rm{RD}}}}} \right)}}{{\eta \Psi {\gamma _{{\rm{RD}}}}}}} \right)\notag\\
	&= \int\limits_0^\infty  {{F_{{\gamma _{{{\rm{S}}_m}{\rm{R}}}}}}\left( {\frac{{{\gamma _{th}}\left( {1 + 2\sqrt {\eta x}  + \eta x} \right)}}{{\eta \Psi x}}} \right) \times {f_{{\gamma _{{\rm{RD}}}}}}(x)dx} \notag\\
	&= 1 + \sum\limits_{b = 1}^M {{{( - 1)}^b}C_M^b} \exp \left( { - \frac{{b{\lambda _{{\rm{SR}}}}{\gamma _{th}}}}{\Psi }} \right) \notag\\ &\int\limits_0^\infty  {{\lambda _{{\rm{RD}}}}\exp \left( { - \frac{{2b{\lambda _{{\rm{SR}}}}{\gamma _{th}}}}{{\Psi \sqrt {\eta x} }}} \right)\exp \left( { - \frac{{b{\lambda _{{\rm{SR}}}}{\gamma _{th}}}}{{\eta \Psi x}} - {\lambda _{{\rm{RD}}}}x} \right)dx} .
\end{align}
By adopting Taylor series for $\exp \left( { - \frac{{2b{\lambda _{{\rm{SR}}}}{\gamma _{th}}}}{{\Psi \sqrt {\eta x} }}} \right) = \sum\limits_{t = 0}^\infty  {\frac{{{{( - 1)}^t}{2^t}}}{{t!}}} {\left( {\frac{{b{\lambda _{{\rm{SR}}}}{\gamma _{th}}}}{{\Psi \sqrt \eta  }}} \right)^t}{x^{ - t/2}}$, equation \eqref{EQ34} can be rewritten as follows
\begin{align}
	\label{EQ35}
	&{\rm{O}}{{\rm{P}}_{{\rm{DPSR}}}} \notag\\&= 1 + \sum\limits_{t = 0}^\infty  {\sum\limits_{b = 1}^M {\frac{{{{( - 1)}^{t + b}}C_M^b{2^t}{\lambda _{{\rm{RD}}}}}}{{t!}}{{\left( {\frac{{b{\lambda _{{\rm{SR}}}}{\gamma _{th}}}}{{\Psi \sqrt \eta  }}} \right)}^t}} } \notag\\
	& \exp \left( { - \frac{{b{\lambda _{{\rm{SR}}}}{\gamma _{th}}}}{\Psi }} \right) \int\limits_0^\infty  {{x^{ - t/2}}\exp \left( { - \frac{{b{\lambda _{{\rm{SR}}}}{\gamma _{th}}}}{{\eta \Psi x}} - {\lambda _{{\rm{RD}}}}x} \right)dx}.
\end{align}
Finally, by applying \cite[3.471.9]{gradshteyn2014}, we obtain \eqref{eq:Theorem_3}, which finises the proof.
\subsection{Proof of Theorem~\ref{Theorem:IPv2}} \label{Appendix:IPv2}
By substituting ${\rho ^*} = \frac{1}{{1 + \left| {{h_{{\rm{RD}}}}} \right|\sqrt \eta  }}$ into \eqref{EQ26}, $Q_2^*$ can be calculated as
	\begin{align}
		\label{EQ37}
		&Q_2^*(w)= \Pr \left( {\frac{{\eta {\gamma _{{{\rm{S}}_m}{\rm{R}}}}{\gamma _{{\rm{RE}}}}\Psi }}{{\eta {\Phi _1}{\gamma _{{\rm{RE}}}} + {\Phi _2}\Phi x + {\Phi _2}}} < {\gamma _{th}}} \right) \notag\\
		&= \int\limits_0^\infty  {\underbrace {\Pr \left( {\frac{{\eta {\gamma _{{{\rm{S}}_m}{\rm{R}}}}{\gamma _{{\rm{RE}}}}\Psi }}{{\eta {\Phi _3}{\gamma _{{\rm{RE}}}} + {\Phi _4}\Phi x + {\Phi _4}}} < {\gamma _{th}}} \right)}_\Theta {f_{{\gamma _{{\rm{RD}}}}}}\left( \omega  \right)d\omega } ,
\end{align}
where $\Phi_1 \triangleq \left( {\frac{{1 + \sqrt {{\gamma _{{\rm{RD}}}}\eta } }}{{\sqrt {{\gamma _{{\rm{RD}}}}\eta } }}} \right)$, $\Phi_2 \triangleq  {1 + \sqrt {{\gamma _{{\rm{RD}}}}\eta } } $, $\Phi_3 \triangleq \left( {\frac{{1 + \sqrt {\eta w} }}{{\sqrt {\eta w} }}} \right)$, $\Phi_4 \triangleq  {1 + \sqrt {w\eta } } $.
Based on \eqref{EQ19} and then with the help of \cite[3.324.1]{gradshteyn2014}, $\Theta $ can be computed as:
	\begin{align}
		\label{EQ39}
		&\Theta( x)  = \Pr \left( {\frac{{\eta {\gamma _{{{\rm{S}}_m}{\rm{R}}}}{\gamma _{{\rm{RE}}}}\Psi }}{{\eta \Phi_3 {\gamma _{{\rm{RE}}}} + \Phi_4 \Phi x + \Phi_4}} < {\gamma _{th}}} \right) \notag\\
		&= \int\limits_0^\infty  {\Pr \left( {{\gamma _{{{\rm{S}}_m}{\rm{R}}}} < \frac{{{\gamma _{th}}\left( {\eta \Phi_3 z + \Phi_4\Phi x + \Phi_4 } \right)}}{{\eta z\Psi }}} \right) {f_{{\gamma _{{\rm{RE}}}}}}(z)dz} \notag\\
		&= 1 + \sum\limits_{b = 1}^M {{{\left( { - 1} \right)}^b}} C_M^b\exp \left( { - \frac{{b{\lambda _{{\rm{SR}}}}{\gamma _{th}}\Phi_3 }}{\Psi }} \right) \notag\\
		&= 1 + \sum\limits_{b = 1}^M {{{\left( { - 1} \right)}^b}} C_M^b\exp \left( { - \frac{{b{\lambda _{{\rm{SR}}}}{\gamma _{th}}\Phi_3 }}{\Psi }} \right) \\
		&\times \int\limits_0^\infty  {{\lambda _{{\rm{RE}}}}\exp \left( { - \frac{{b{\lambda _{{\rm{SR}}}}{\gamma _{th}}\left( {\Phi x + 1} \right)\Phi_4}}{{\eta z\Psi }} - {\lambda _{{\rm{RE}}}}z} \right)} dz \notag \\
		&= 1 + 2\sum\limits_{b = 1}^M {{{\left( { - 1} \right)}^b}} C_M^b\exp \left( { - \frac{{b{\lambda _{{\rm{SR}}}}{\gamma _{th}}\Phi_3 }}{\Psi }} \right) \notag\\
		&\times \sqrt {\frac{{b{\lambda _{{\rm{SR}}}}{\lambda _{{\rm{RE}}}}{\gamma _{th}}\left( {\Phi x + 1} \right)\Phi_4 }}{{\eta \Psi }}} \notag\\ & \times {K_1}\left( {2\sqrt {\frac{{b{\lambda _{{\rm{SR}}}}{\lambda _{{\rm{RE}}}}{\gamma _{th}}\left( {\Phi x + 1} \right)\Phi_4 }}{{\eta \Psi }}} } \right).
\end{align}
By substituting \eqref{EQ39} into \eqref{EQ20}, the term $Q_2^*$ can be reformulated as follows
	\begin{align}
		\label{EQ40}
		Q_2^*(w) &= 1 + 2\sum\limits_{b = 1}^M {{{\left( { - 1} \right)}^b}C_M^b} \exp \left( { - \frac{{b{\lambda _{{\rm{SR}}}}{\gamma _{th}}}}{\Psi }} \right) \notag\\
		&\times \sqrt {\frac{{b{\lambda _{{\rm{SR}}}}{\lambda _{{\rm{RE}}}}{\gamma _{th}}\left( {\Phi x + 1} \right)}}{{\eta \Psi }}} \notag\\
		&\times {\lambda _{{\rm{RD}}}}\int\limits_0^\infty  \begin{array}{l}
			\sqrt {1 + \sqrt {\eta \omega } } \exp \left( { - \frac{{b{\lambda _{{\rm{SR}}}}{\gamma _{th}}}}{{\Psi \sqrt {\eta \omega } }} - {\lambda _{{\rm{RD}}}}\omega } \right)\\
			{K_1}\left( {2\sqrt {\frac{{b{\lambda _{{\rm{SR}}}}{\lambda _{{\rm{RE}}}}{\gamma _{th}}\left( {\Phi x + 1} \right)\left( {1 + \sqrt {\eta \omega } } \right)}}{{\eta \Psi }}} } \right)d\omega 
		\end{array}
	\end{align}
\begin{align}
		&= 1 + 2\sum\limits_{b = 1}^M {{{\left( { - 1} \right)}^b}C_M^b} \exp \left( { - \frac{{b{\lambda _{{\rm{SR}}}}{\gamma _{th}}}}{\Psi }} \right) \notag\\
		&\times \sqrt {\frac{{b{\lambda _{{\rm{SR}}}}{\lambda _{{\rm{RE}}}}{\gamma _{th}}\left( {\Phi x + 1} \right)}}{{\eta \Psi }}}  \times \Delta (\omega ),
\end{align}
	where the following definition holds
	\begin{equation}
	 \begin{array}{l}
		\Delta (\omega ) \buildrel \Delta \over = {\lambda _{{\rm{RD}}}}\int\limits_0^\infty  {\sqrt {1 + \sqrt {\eta \omega } } \exp \left( { - \frac{{b{\lambda _{{\rm{SR}}}}{\gamma _{th}}}}{{\Psi \sqrt {\eta \omega } }} - {\lambda _{{\rm{RD}}}}\omega } \right)} \\
		\times {K_1}\left( {2\sqrt {\frac{{b{\lambda _{{\rm{SR}}}}{\lambda _{{\rm{RE}}}}{\gamma _{th}}\left( {\Phi x + 1} \right)\left( {1 + \sqrt {\eta \omega } } \right)}}{{\eta \Psi }}} } \right)d\omega .
	\end{array}
\end{equation}
Finally, based on \eqref{EQ26}, \eqref{EQ27}, and \eqref{EQ40}, we claim \eqref{eq:Theorem_4}, which completes the proof.

%\label{Appendix:A}
%\renewcommand{\theequation}{A.\arabic{equation}}
%\setcounter{equation}{0}

%\balance
\bibliographystyle{IEEEtran}
\bibliography{IEEEfull}

%\vspace{10cm}
\begin{IEEEbiographynophoto} 
	{Tan N. Nguyen} was born at Nha Trang City, Vietnam, in 1986. He received B.S. and M.S. degrees in electronics and telecommunications engineering from Ho Chi Minh University of Natural Sciences, a member of Vietnam National University at Ho Chi Minh City (Vietnam) in 2008 and 2012, respectively. He is currently pursuing his Ph.D. degree in electrical engineering at VSB Technical University of Ostrava, Czech Republic. He got his Ph.D. degree in computer science, communication technology and applied mathematics at VSB Technical University of Ostrava, Czech Republic, in 2019. In 2013, he joined the Faculty of Electrical and Electronics Engineering of Ton Duc Thang University, Vietnam and have been working as lecturer since then. His major interests are cooperative communications, cognitive radio, and physical layer security.
\end{IEEEbiographynophoto}
\begin{IEEEbiographynophoto} 
	{Dinh-Hieu TRAN} (S'20) was born and grew up in Gia Lai, Vietnam (1989). He received a B.E. degree in Electronics and Telecommunication Engineering Department from Ho Chi Minh City University of Technology, Vietnam, in 2012. He finished his M.Sc degree in Electronics and Computer Engineering from Hongik University, Korea, in 2017, and the Ph.D. degree in Telecommunications Engineering from the Interdiscipline Reliability and Trust (SnT) research center, the University of Luxembourg in December 2021, under the supervision of Prof. Symeon Chatzinotas and Prof. Björn Ottersten. His major interests include UAV, Satellite, IoT, Mobile Edge Computing, Caching, B5G for wireless communication networks. In 2016, he received the Hongik Rector Award for his excellence during his master's study at Hongik University. He was a co-recipient of the IS3C 2016 best paper award. In 2021, he was nominated for the Best Ph.D. Thesis Award at the University of Luxembourg.
\end{IEEEbiographynophoto}
\begin{IEEEbiographynophoto} 
	{Trinh Van Chien} (S'16-M'20) received the B.S. degree in Electronics and Telecommunications from Hanoi University of Science and Technology (HUST), Vietnam, in 2012. He then received the M.S. degree in Electrical and Computer Enginneering from Sungkyunkwan University (SKKU), Korea, in 2014 and the Ph.D. degree in Communication Systems from Link\"oping University (LiU), Sweden, in 2020. He was  a research associate at University of Luxembourg. He is now with the School of Information and Communication Technology (SoICT), Hanoi University of Science and Technology (HUST), Vietnam. His interest lies in convex optimization problems and machine learning applications for wireless communications and image \& video processing. He was an IEEE wireless communications letters exemplary reviewer for 2016, 2017 and 2021. He also received the award of scientific excellence in the first year of the 5Gwireless project funded by European Union Horizon's 2020.
\end{IEEEbiographynophoto}
\begin{IEEEbiographynophoto} 
	{Van-Duc Phan} was born in 1975 in  Long An province, Vietnam. He received his M.S. degree in Department of Electric, Electrical and Telecommunication Engineering from Ho Chi Minh City University of Transport, Ho Chi Minh, Vietnam and Ph.D. degree in Department of Mechanical and Automation Engineering, Da-Yeh University, Taiwan in 2016. Currently, he is a research interests are in sliding mode control, non-linear systems or active manegtic bearing, flywheel store energy systems, power system optimization, optimization algorithms, and renewable energies, Energy harvesting (EH) enabled cooperative networks, Improving the optical properties, lighting performance of white LEDs, Energy efficiency LED driver integrated circuits, Novel radio access technologies, Physical security in communication network
\end{IEEEbiographynophoto}
\begin{IEEEbiographynophoto} 
	{Miroslav Voznak} (Senior Member, IEEE)
	was born in 1971. He received the Ph.D. degree
	in telecommunications from the VSB, Technical
	University of Ostrava, Czech Republic, in 2002.
	He is currently a Professor of electronics and com-
	munication technologies with the Department of
	Telecommunications, VSB, Technical University
	of Ostrava, and a Foreign Professor with Ton Duc
	Thang University, Ho Chi Minh City, Vietnam.
	He is a coauthor more than one hundred articles
	in journals indexed in the SCIE database. His research interests include IP
	telephony, wireless networks, network security, and big data analytics.
\end{IEEEbiographynophoto}
\begin{IEEEbiographynophoto} 
	{Phu Tran Tin} was born in Khanh Hoa, Viet Nam, in 1979. He received the Bachelor's degree (2002) and Master's degree (2008) from Ho Chi Minh City University of Science. He is currently a lecturer at the Faculty of Electronics Technology (FET), Industrial University of Ho Chi Minh City. In 2019, he received the Ph.D. degree in Faculty of Electrical Engineering and Computer Science, VSB – Technical University of Ostrava, Czech Republic. His major research interests are wireless communication in 5G, energy harvesting, performance of cognitive radio, physical layer security and NOMA.
\end{IEEEbiographynophoto}
\begin{IEEEbiographynophoto}
	{Symeon Chatzinotas} is currently Full Professor / Chief Scientist I and Head of the SIGCOM Research Group at SnT, University of Luxembourg. He is coordinating the research activities on communications and networking, acting as a PI for more than 20 projects and main representative for 3GPP, ETSI, DVB.
	
	In the past, he has been a Visiting Professor at the University of Parma, Italy, lecturing on “5G Wireless Networks”. He was involved in numerous R\&D projects for NCSR Demokritos, CERTH Hellas and CCSR, University of Surrey.
	
	He was the co-recipient of the 2014 IEEE Distinguished Contributions to Satellite Communications Award and Best Paper Awards at EURASIP JWCN, CROWNCOM, ICSSC. He has (co-)authored more than 450 technical papers in refereed international journals, conferences and scientific books.
	
	He is currently in the editorial board of the IEEE Transactions on Communications, IEEE Open Journal of Vehicular Technology and the International Journal of Satellite Communications and Networking.
\end{IEEEbiographynophoto}
\begin{IEEEbiographynophoto}
{Derrick Wing Kwan Ng (S'06-M'12-SM'17-F'21)} received the bachelor's degree with first-class honors and the Master of Philosophy (M.Phil.) degree in electronic engineering from the Hong Kong University of Science and Technology (HKUST) in 2006 and 2008, respectively. He received his Ph.D. degree from the University of British Columbia (UBC) in Nov. 2012. He was a senior postdoctoral fellow at the Institute for Digital Communications, Friedrich-Alexander-University Erlangen-N\"urnberg (FAU), Germany. He is now working as a Scientia Associate Professor at the University of New South Wales, Sydney, Australia.  His research interests include convex and non-convex optimization, physical layer security, IRS-assisted communication, UAV-assisted communication, wireless information and power transfer, and green (energy-efficient) wireless communications.
	
Dr. Ng has been listed as a Highly Cited Researcher by Clarivate Analytics (Web of Science) since 2018.  He received the Australian Research Council (ARC) Discovery Early Career Researcher Award 2017,  the IEEE Communications Society Stephen O. Rice Prize 2022,  the Best Paper Awards at the WCSP 2020,  2021,  IEEE TCGCC Best Journal Paper Award 2018, INISCOM 2018, IEEE International Conference on Communications (ICC) 2018, 2021,  IEEE International Conference on Computing, Networking and Communications (ICNC) 2016,  IEEE Wireless Communications and Networking Conference (WCNC) 2012, the IEEE Global Telecommunication Conference (Globecom) 2011, 2021 and the IEEE Third International Conference on Communications and Networking in China 2008.  He has been serving as an editorial assistant to the Editor-in-Chief of the IEEE Transactions on Communications from Jan. 2012 to Dec. 2019. He is now serving as an editor for the IEEE Transactions on Communications, the IEEE Transactions on Wireless Communications, and an area editor for the IEEE Open Journal of the Communications Society.
\end{IEEEbiographynophoto}
\begin{IEEEbiographynophoto}
	{H. Vincent Poor} (S'72, M'77, SM'82, F'87) received the Ph.D. degree in EECS from
		Princeton University in 1977. From 1977 until 1990, he was on the faculty of the
		University of Illinois at Urbana-Champaign. Since 1990 he has been on the faculty at
		Princeton, where he is currently the Michael Henry Strater University Professor. During
		2006 to 2016, he served as the dean of Princeton’s School of Engineering and Applied
		Science. He has also held visiting appointments at several other universities, including
		most recently at Berkeley and Cambridge. His research interests are in the areas of
		information theory, machine learning and network science, and their applications in
		wireless networks, energy systems and related fields. Among his publications in these
		areas is the forthcoming book Machine Learning and Wireless Communications.
		(Cambridge University Press). Dr. Poor is a member of the National Academy of
		Engineering and the National Academy of Sciences and is a foreign member of the
		Chinese Academy of Sciences, the Royal Society, and other national and international
		academies. He received the IEEE Alexander Graham Bell Medal in 2017.
\end{IEEEbiographynophoto}
\end{document}